 \newcommand{\B}{\mathcal{B}}
 \newcommand{\C}{\mathcal{C}}
 \newcommand{\G}{\mathcal{G}}
 \newcommand{\D}{\mathcal{D}}
 \newcommand{\HH}{\mathcal{H}}
 \newcommand{\comp}{\circ}
\begin{document}

\title{Patterns on data described by vague limits, vague colimits and vague commutativity}

\author{Carlos Leandro\inst{1} \and Lu\'{\i}s Monteiro\inst{2}}

\institute{\email{miguel.melro.leandro@gmail.com}\\ Departamento de Mate\'{a}tica, Instituto Superior de Engenharia de Lisboa, Portugal.
\and CITI, Departamento de Inform\'{a}tica,  Faculdade de Ci\^{e}ncias e Tecnologia, \\Universidade Nova de Lisboa, 2829-516 Caparica, Portugal.
}

\maketitle

\begin{abstract}
The development of machine learning in particular and artificial
intelligent in general has been strongly conditioned by the lack of an appropriated framework to specify and integrate learning processes, data transformation processes and data models. In this work we extend traditional algebraic specification methods to this type of framework. Limits and colimits of diagrams are universal constructions fundamental in different mathematical domains importance in semantic modeling. The aim of our work is to study the possibility of extending these algebraic frameworks to the specification of vague structures and to the description of vague patterns on data.  
\end{abstract}

\section{\uppercase{Introduction}}

Modern human activities impose the description of structures similar to set-theoretic notions, but that are not governed by classical logic. This, in some sense, explains the increasing importance of probabilistic and fuzzy models in our daily life. These models can be seen as patterns that are present on data, and its use is usually governed by probabilistic logic or a fuzzy logic.  We centered our work on the description of vague structures. And given the descriptive power of algebraic tools like sketches, we work on the possibility of performing this description using limits, colimits and commutativity.  For it, this paper presents vague conservative extension to these set-theoretic notions. These notions are described in a general universe for fuzzy modeling given by a class of structures, denoted by $Rel_\Omega$, having by morphisms relations evaluated in a  multi-valued logic $\Omega$, and where composition is defined using a semiring of logic connectives. Objects in this category are characterized by a membership relation and a similarity relation, encoding the degrees of vagueness for ``$x\in X$'' and describing the degree of truth for proposition ``$x=y$.'' In $Rel_\Omega$ morphism are conservative bimodules, a type of relation evaluated in $\Omega$, which conserves membership and similarity degrees between target and source objects.

The universe $Set$ is a substructure of $Rel_\Omega$. In the following we present conservative extensions in $Rel_\Omega$ to the notions of limit, colimit and commutativity in $Set$, in the sense that when a diagram is defined using maps between classic sets, the described extensions coincide with the categorical ones. Furthermore,  our approach allowed extending Ehresmann's sketch structure in two directions. We propose a logic extension, used to specify weighted propositions like ``a distribution $d$ is a vague limite for diagram $D$,'' ``a similarity relation $r$ is a vague colimite to a diagram $D$ colimit,'' and by a ``diagram $D$ is vaguely commutative.'' We also propose a functional extension, where instead of diagrams we use multi-diagrams for the  graphic-based proposition description. Here we assume a multi-graph as a structure defined by arrows linking two sets of vertices. The use of this graphic representation to construction of models impose the existence of a rich interpretation framework, named of \emph{multi-category}. Categories like $Rel_\Omega$ have this nice structure characterized by the existence of operators for the construction of complex objects and morphisms from simplest ones, and the existence of an operator on arrows, defined everywhere having by restriction a composition operator. In this context we see a multi-diagram as a circuit defined by aggregation of multi-morphisms.

Traditional data specification assume the data model to be a correct reflection of the world being captured and assume that the specification accurate. It is rarely the case in real life that these assumptions are met. Where data models are usually vague structures,  generated by data aggregation and characterized by propositions evaluated in nonclassical logics. Different semantics for data modeling have been proposed to handle different categories of data quality (or lack thereof). Our approach to data modeling was centered in the semantic extension of Ehresmann sketches, where objects are assumed to be characterized by membership and similarity relations both evaluated in the same multi-valued logic. This imposed the existence on the modulation universe of an internal multi-valued logic. The categories adequate to this type of modulation were named of $\Omega$-multi-categories, where we can define vague notions of limit and commutativity when the category has local products.  For the description of structures using vague colimits we must assume the existence of an additional additive structure on morphisms. This structure can be find for instance in categories like $Rel_\Omega$ or in the category having as objects families of vectorial spaces where it emerges from the sum of linear transformations.

\section{\uppercase{Multi-categories}}
We begin by presenting the basic notions needed on the definition of multi-diagram and multi-category.

\subsection{Monoidal category on objects}
A category $\C$ is \emph{monoidal}, if there is:
\begin{enumerate}
  \item a bifunctor $\_\otimes\_:\C\times\C\rightarrow \C$, called tensorial product or aggregator,
  \item an object $\top$, called the \emph{unit}, and
  \item for every object $X,Y$ and $Z$ isomorphisms $\alpha_{XYZ}:(X\otimes Y)\otimes Z\cong X\otimes (Y\otimes Z)$, $r_X:X\otimes\top\cong X$ and $l_X:\top\otimes X\cong X$, and $s_{XY}:X\otimes$.
\end{enumerate}
The monoidal category $(\C,\otimes,\top)$ is \emph{monoidal symmetric}, if for every object $X,Y$ there is an isomorphism $Y\cong Y\otimes X$.

A residuum for the tensor $\otimes$, in a monoidal category $(\C,\otimes,\top)$, is a binary operator $\setminus$, defined for its objects such that $X,Y$ and $Z$:
\[
Z=X\otimes Y\text{ se e só se }X=Z\setminus Y.
\]

Note that the residue any not be functorial. A monoidal symmetric category with a functorial residue defines a monoidal close category \cite{maclane71}.

Bellow we described useful monoidal symmetric category used in this work:

\begin{example}[Finite indexed sets]
Consider $Fam$ the category having by objects finite indexed sets $(A_i)_I$, and by morphisms families of maps $(f_j)_J$, is a morphism from $(A_i)_I$ to $(A_j)_J$, if there is a map $\alpha:J\rightarrow I$ such that $f_j:A_{\alpha(j)}\rightarrow A_j$ is a map. $Fam$ is a residuated category, having by object aggregator $(A_i)_I\cup (A_j)_J=(A_i)_{I\amalg J}$, where $I\amalg J$ is the disjoint union, its residuum  is $(A_i)_I\setminus (A_j)_J=(A_k)_K$ a family having by elements indexed sets in $(A_j)_J$ what do not belong to $(A_j)_J$, i.e. $K\subset I$ and $i\in K$ iff for every $j\in J$, $A_j\neq A_i$.
\end{example}

\begin{example}[CRlattice]
A complete residuated lattice (CRlattice for short) is an algebra $\mathbf{\Omega}=(\Omega,\otimes,\Rightarrow,\wedge,\vee,\bot,\top)$ with four binary operations and two constants such that:
\begin{enumerate}
  \item $(\Omega,\wedge,\vee,\bot,\top)$ is a complete lattice with largest element $\top$ and least element $\bot$ (with respect to the lattice ordering $\leq$);
  \item $(\Omega,\otimes,\top)$ is a commutative semigroup with the unit $\top$, i.e. $\otimes$ is commutative, associative and $\top\otimes x = x$ for all $x$;
  \item the residuation equivalence holds:
  \begin{center}
      $z\leq(x\Rightarrow y)$ iff $x\otimes z\leq y$ for all $x,y,z$.
  \end{center}
\end{enumerate}
Since these lattices are complete, for every subset $M\neq \emptyset$ of $\Omega$ we have $\bigvee M\in \Omega$ and $\bigwedge M\in \Omega$.

By a ``many-valued logic'' we mean a logic of which the truth-values set is just a CRlattices or equivalently when the truth-values set is a \emph{quantale} \cite{Stubbe05}. Such a logic is called a monoidal logic in \cite{Hohle94}\cite{Esteve01}.  A CRlattice is a  BL-algebra if additionally the following conditions hold:   $x\wedge y = x\otimes(x\Rightarrow y)$ (is divisibility) and $(x\Rightarrow y)\vee (y \Rightarrow x)= 1$ (is pre-linearity). This kind of logic has been extensively investigated under the name Basic logic in the literature \cite{Hajek98}. Particularly useful BL-algebras, defined when $\Omega$ is the closed unit real interval, when $x\otimes y=\max(x+y-1,0)$ are used for modeling \emph{{\L}ukasiewicz logic}. \emph{G\"{o}del logic} has as models the BL-algebras described using $x\otimes y=\min(x,y)$, and for product logic it is assumed that $x\otimes y=x.y$ (product of reals), see \cite{Hajek98}\cite{Gerla00}.

A CRlattice has a natural structure residuated category, when we take as objects elements in $\Omega$, and if we define a morphism $f:\alpha \rightarrow \beta$ if $\alpha\leq \beta$ in the lattice $\Omega$. The object aggregator is defined by $\otimes$  having by residuum $\setminus$.

On the following, CRlattice structure is importante on the definition of  semirings, used on the definition of composition. With this we will try to catch diferente possibilities for describing the  composition between vague relations.
\end{example}

A functor $F:\C\rightarrow \D$ between resituated categories, is a \emph{strict residuated functor} if it preserves the object agragator, unit and residuum strictly, e.g.
\[
F(X\otimes Y)=FX\otimes FY, F(X\setminus Y)=FX\setminus FY \text{ and } F(\bot)=\bot.
\]

\subsection{Multi-diagrams}

Let $\C$ be a monoidal category on objects with object aggregator $\otimes$. For every object $X$, such that $X\ncong \top$, a \emph{factorization} for $X$ in $\C$ is a family of objects $(X_i)_I$ such that $\bigotimes_{I}X_i= X$,  with each $X_i\ncong \top$. Note that objects may have distinct factorizations. A morphism $f:X\rightarrow Y$ and two factorizations $\bigotimes_{I}X_i= X$ and $\bigotimes_{J}Y_j= Y$, define a \emph{multi-morphism}, denoted in this case by $f:(X_i)_I\rightarrow (Y_j)_J$. We simplify notation by writing $\Box f= (X_i)_I$ and $f \Box =  (Y_j)_J$ and we call them, respectively, $f$ source and target. In Figure \ref{multiarrow} we presented a pictographic representation for a multi-morphism $f$ with $\Box f =\{X_0,X_1,X_2\}$ and $f \Box =\{X_3,X_4,X_5\}$. To this type of structure, linking a set of source nodes and a set of target nodes we called a \emph{multi-arrow}.

\begin{figure}[h]
 \[
 \small
\xymatrix @=5pt {
&&&*+[o][F-]{f}\ar `r[rd][rd]\ar `r[rrd][rrd]\ar `r[rrrd][rrrd]&&&\\
 X_0\ar `u[urrr][urrr]&X_1\ar `u[urr][urr]& X_2\ar `u[ur][ur]&&X_3&X_4&X_5
 }
\]
\caption{Multi-morphism.}\label{multiarrow}
\end{figure}
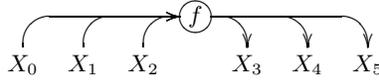

A set of multi-arrows $A$ with nodes on the set $V$ describes a \emph{multi-graph}, represented by a pair $\G=(V,A)$. A source and a target for a multi-graph $\G$ are sets of nodes, denoted respectively by, $\Box \G$ and  $\G \Box$, and we write in this case $\G:\Box \G\rightarrow \G \Box$. By $\bot$ we identify the empty multi-graph having no nodes and no multi-arrows, $\bot=(\emptyset,\emptyset)$.

\begin{definition}[Multi-diagram]
If $\C$ has the structure of a monoidal category on objects, with object aggregator $\otimes$, a \emph{multi-diagram} $D:\G\rightarrow \C$, is a correspondence, $D$, defined from a multi-graph $\G$ to $\C$, assigning to each node in $\G$ a object in $\C$ and to each multi-arrow a multi-morphism defined in $\C$, such that for every multi-arrow $f:\{X_1,\ldots,X_n\}\rightarrow\{Y_1,\ldots,Y_m\}$ in $\G$,
\[
D(f):D(X_1)\otimes\ldots\otimes D(X_n)\rightarrow D(Y_1)\otimes\ldots\otimes D(Y_m).
\]
\end{definition}

On the set of finite multi-graphs we defined a gluing operator. This operator for every pair of multi-graphs $\G_1=(V_1,A_1)$ and $\G_2=(V_2,A_2)$, generates a new multi-graph using as gluing points vertices, with the same label, in the set of target vertices of $\G_1$  and in the set of sources vertices of $\G_2$,  making all other vertices distinct. For multi-graphs $\G_1$ and $\G_2$ it produces a multi-graph $\G_2\circ \G_1$, having by vertices  $V_1\coprod (V_2\setminus(\G_1\Box \cup \G_2\Box))$ and by multi-arrows in $\G_1$ and $\G_2$, $A_1\coprod A_2$.

Similarly to the category freely generated by a digraph \cite{Barr95}, the structure  of every multi-graph $\G$, can be completed to a multi-graph $\G^\ast$, closed for gluing operation: in the sense that every multi-arrow can be seen as a multi-graph and the gluing of two multi-arrows must be a multi-arrow in $\G^\ast$. The multi-graph $\G^\ast$ has the same vertices as $\G$, and has by multi-arrows multi-graphs defined by sets of multi-arrows in $\G$, using the gluing operation. The structure defined by completion having by product the multi-graph gluing operator, and by identity the empty multi-graph $\bot$, denoted $(\G^\ast,\circ,\bot)$, is a monoid. In $\G^\ast$ two multi-arrows $f$  and $g$ are called \emph{composable}  if $\Box g = f \Box$, and the gluing  restricted to composable multi-arrows is a \emph{composition operator}. In this sense when the gluing is restricted to composable multi-graphs $\G^\ast$ is a category. This category has by objects sets of vertices and each object has by identity the empty multi-arrow. The multi-graph $\G^\ast$ is a residuated category where  object aggregator is the set union and having by residuum the set diference. Moreover, for every pair of multi-arrows $f$ and $g$, we have:
\begin{equation}
                \Box(g\circ f)=\Box f \cup \Box g\backslash f\Box,\text{ and }
                (g\circ f)\Box=g\Box \cup f\Box \backslash \Box g.
\end{equation}
This structure presente in $\G^\ast$ can be find in diferente contexts and it will be formalized on the following section.

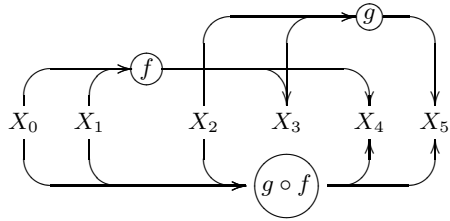
\begin{figure}[h]
\[
\small
\xymatrix @=7pt {
&&&&&*+[o][F-]{g}\ar `r[rdd][rdd] &\\
&&*+[o][F-]{f}\ar `r[rrd][rrd]\ar `r[rrrd][rrrd] &&&&\\
 X_0\ar `u[urr][urr]\ar `d[drrrr][drrrr]&X_1\ar `u[ur][ur]\ar `d[drrr][drrr]& &X_2\ar `u[uurr][uurr]\ar `d[dr][dr]&X_3\ar `u[uur][uur]&X_4&X_5\\
 &&&&*++[o][F-]{g\comp f}\ar `r[ru][ru]\ar `r[rru][rru]&&\\
 }
\]
\caption{A multi-diagram describing the composition of multi-morphisms.}\label{multimorphism composition}
\end{figure}

\subsection{Multi-categories}\label{multicat}
We used essentially multi-categories as a framework for the interpretation of circuits, taken as structures defining  vague relations between a set of input values (the values carried
on its source wires) and a set of output values (the values carried on its target wires). Any two points on the same wire are constrained to carry equal values. We view the discrete components as multi-arrows of a category having by objects sets of types, where we can build complex circuits from the basic components using the gluing operation. Relations have been proposed as a paradigm for circuit development for several reasons. Relations provide a rich algebra for transforming and combining terms, and a natural treatment of non-determinism \cite{Brown95}\cite{Brown94}. The notion of multi-category, proposed in this section, tries to capture the basic structure needed to circuit interpretation.  Here we assume morphisms and objects defined by aggregation of simplest structures. Putting however emphasis on the description of a framework adequate to natural treatment of vagueness. The name multi-category, appears in the literature in different contexts and with different meanings. Here the use of the name multi-category was inspired by the notion of multi-limit propose by Diers on the context of free product completion  \cite{Diers77}. And the described structure is common to many categories, we tried to emphasis the importance of such structure presenting bellow different examples.

A particulary useful example of multi-category can be constructed based on the category of vectorial spaces.

\begin{example}[The multi-category of families of vectorial spaces over $\mathds{K}$]
Let $\mathds{K}$ be a field, the category $Fam(Vec_\mathds{K})$ of families of finite-dimensional spaces over the field $\mathds{K}$, has a structure in some sense similar to the category of multi-graphs. $Fam(Vec_\mathds{K})$ has by objects families of vector spaces $(V_i)_I$ over $\mathds{K}$  and its morphisms are families of linear transformations between vectorial spaces. A morphism $f:(V_i)_I\rightarrow (U_j)_J$ is a  family of linear transformation
\[
f=\left(f_{ij}:V_i\rightarrow U_j\right)_{I\times J}.
\]In this case we define $\Box f=(V_i)_I$ and $f \Box =(U_j)_J$. The composition of linear transformations used on the category of vectorial spaces, can be extended to a total operator in $Fam(Vec_\mathds{K})$. Given two transformations $f:(V_i)_I\rightarrow (V_j)_J$ and $g:(V_k)_K\rightarrow (V_l)_L$, we can define a product between transformations extending composition, we define $h=g\bullet f$, where $f=(f_{ij})_{I\times J}$ and $g=(g_{ij})_{K\times L}$, by: $h_{ij}=f_{ij}$, if $j\notin K\cap J$,  $h_{ij}=g_{jk}\circ f_{ij}$, if $j\in K\cap J$, and $h_{ij}=g_{ij}$ if $i\notin K\cap J$.

Note that, $Fam(Vec_\mathds{K})$ is a residuted category having by object aggregator the union of families, and we have
\begin{equation}
                \Box(g\bullet f)=\Box f \cup \Box g\backslash f\Box,\text{ and }
                (g\bullet f)\Box=g\Box \cup f\Box \backslash \Box g.
\end{equation}
\end{example}

This type structure can also emerge from some known universal completions. Bellow  we presented an example of this using  product completion \cite{Diers77}.

\begin{example}[Free product completion]
The free product completion $\prod(\C)$ of a category $\C$ is a structure having by
\begin{enumerate}
  \item objects small-indexed families $\bar{A}=(A_i)_{i\in I}$ of $\C$-objects $A_i$;
  \item a morphism $\bar{f}:(A_i)_{i\in I}\rightarrow (B_j)_{j\in J}$ in $\prod(\C)$ is given by a function $\varphi:J\rightarrow I$, and by a small-indexed families of $\C$-morphisms $f_j:A_{\varphi(j)}\rightarrow B_j$ $(j\in J)$.
  \item given a morphism $\bar{f}=(f_j:A_{\varphi(j)}\rightarrow A_j)_J$, described by $\varphi:J\rightarrow I$, and a morphism $\bar{g}=(g_j:A_{\psi(j)}\rightarrow A_j)_J$, described by $\psi:I\rightarrow N$, an extension for composition in $\C$ is given defining $\bar{h}=\bar{g}\bullet \bar{f}$, where $\bar{h}=(h_t:A_{\alpha(t)}\rightarrow A_t)_{J\cup I\setminus L}$, with $\alpha:J\cup I\setminus L\rightarrow N\cup L\setminus I$ such that:
        \begin{enumerate}
            \item $\alpha(j)=\varphi(j)$ and $h_j=f_j$, if $\varphi(j)\in L\setminus I$,
            \item $\alpha(j)=(\psi\circ\varphi)(j)$ and $h_j=g_{\varphi(j)}\circ f_j$, if $\varphi(j)\in I\cap L$, and
            \item $\alpha(j)=\psi(j)$ and $h_j=g_j$, if $j \in I\setminus L$.
        \end{enumerate}
\end{enumerate}

In $\prod(\C)$ each family of objects $\bar{A}=(A_i)_I$ has by identity a family $\bar{f}=(1_i:A_i\rightarrow A_i)$ of identity morphisms in $\C$. We define $\Box\bar{f}=(A_i)_I$ and $\bar{f}\Box=(B_j)_J$, if $\bar{f}=(f:A_{\varphi(j)}\rightarrow B_j)_J$ is described by $\varphi:J\rightarrow I$. And we denote by $\top$, the objects defined by an empty family of objects, and by $1_\top\in fam(\C_1)$ the empty family of morphisms.

The completion $\prod(\C)$ has structure of residuated category. The object aggregator can be defined as $(A_i)_I\cup (A_j)_J= (A_k)_{k\in I\coprod J}$,  having by residuum $(A_i)_I\setminus (A_j)_J= (A_k)_K$, if $K\subset I$ and is such that: $i\in K$ iff for every $j\in J$, $A_j\neq A_i$.

The operator $\bullet$ induces a monoidal structure in class of morphisms, having by identity $1_\bot$ and for every pair of arrows $\bar{f},\bar{g}\in fam(\C_2)$ is valid
            \begin{equation}
                \Box(\bar{f}\bullet \bar{g})=\Box \bar{g} \cup \Box \bar{f }\backslash \bar{g}\Box,\text{ and }
                (\bar{f}\bullet \bar{g})\Box=\bar{f}\Box \cup \bar{g}\Box \backslash \Box \bar{f}.
            \end{equation}

Note what, for every pair of objects $\bar{A}=(A_i)_{i\in I}$ and $\bar{B}=(B_j)_{j\in J}$, its product in $\prod(\C)$ is given by $\bar{A}\times \bar{B}=(C_l)_{l\in I\amalg J}$, where $C_l=A_l$ if $l\in I$ and $C_l=B_l$ if $l\in J$, and having projections defined by families of identities described using, respectively, coprojections $p_1:I\rightarrow I\amalg J$ and $p_2:J\rightarrow I\amalg J$.

Let $J_\C:\C\rightarrow set(\C)$ be the canonical embedding, transforming objects of $\C$ in families with a singleton object. When $\C$ has products this embedding defines an isomorphism, given by the product-preserving functor $\Pi:\;\prod(\C)\rightarrow \C$, assigning to each object $\bar{A}$ its product $\prod\bar{A}$ in $\C$, and to each morphism $\bar{f}=(f_j)_J:\Box \bar{f} \rightarrow \bar{f}\Box$,  a morphism $\Pi(\bar{f})=\prod_{j\in J} f_j:\Pi(\Box f)\rightarrow \Pi(f\Box)$.  In this case, we write $\prod(\C)\cong \C$.
\end{example}

Given a category $\C$, the $Free(\C)$ subcategory of $\prod(\C)$ having by objects finite indexed families of $\C$-object and having by morphisms families  $\bar{f}=(f_j:A_{\alpha(j)}\rightarrow B_j)_J$, where $\alpha$ is a bijective map, is known as the free strict monoidal completion for $\C$ \cite{maclane71}.
To the generic structure present on these examples we named \emph{multi-category}, and we define:

\begin{definition}[Multi-category]
A multi-category $\D$, is a residuated category, having by composition $\circ$, where its objects are in a class $\D_0$, with a monoidal structures in the class $\D_1$ of its morphisms. The object agragator is denoted by $\cup$, having by residuum $\setminus$, and by  identity $\bot$. The monoidal structure on multi-arrows, is defined by a associative product operator $\bullet$ having by identity the identify morphism $1_\bot:\bot\rightarrow \bot$. Writing $\Box f$ and $f \Box$, for morphism $f$ source and target, respectively, this two monoidal structures are related, by
\begin{enumerate}
  \item If $f$ and $g$ are composable, i.e. if $\Box f = g \Box$, then $f \bullet g = f \circ g$, and
  \item $\Box(f\bullet g)=\Box g \cup \Box f \backslash g\Box,$ and $(f\bullet g)\Box=f\Box \cup g\Box \backslash \Box f$, for every pair of morphisms $f,g\in \D_1$.
\end{enumerate}
We named multi-morphisms to the multi-category morphisms.
\end{definition}

On the following we simplified notation using $\circ$ to denote both multi-category multi-morphism product and composition operators.

A multi-functor $F:\D\rightarrow \HH$ between multi-categories is a strict residuated functor from $\D$ to $\HH$, preserving the multi-morphisms product,
$F(f\bullet g)=F(f)\bullet F(g).$

Considere  $Free(\C)$ the strict monoidal completion of $\C$, and let $J_c:\C\rightarrow Free(\C)$ be the canonical embedding. Given a multi-category $\HH$ and a functor $F:\C\rightarrow \HH$, there is a unique strict closed functor $\bar{F}:Free(\C)\rightarrow \HH$, such that $\bar{F}\circ J_c=F$. In this sense we see a multi-category as a structural completion for a category.

If for every $\D$-object $A$ there is a family of $\C$-objects $(D_i)_I$ such that $\bigcup_IJ(D_i)\cong A$, multi-category $\D$ is \emph{generated} by $J(\C)$. In this sense we defined:

\begin{definition}\label{multiCatgenerated}
Let $\C$ be a category and $J:\C\rightarrow \D$ an embedding where $\D$ is a multi-category. The multi-category $\D$ is generated by $J(\C)$ if the extension $\bar{J}: Free(\C)\rightarrow \D$ along $J_c:\C\rightarrow Free(\C)$ is surjective on objects.
\end{definition}

And the extension along the canonical embedding can be pushed further:

\begin{proposition}
Consider a category $\C$ and multi-categories $\D$ and $\HH$. Given an embedding $J:\C\rightarrow \D$, such that $\D$ is generated by $J(\C)$. For every functor $F:\C\rightarrow \HH$ there is a unique, up to isomorphism, multi-functor $\overline{F}:\D\rightarrow \HH$ such that $$F=\overline{F}\circ J.$$ And in this case we call to $\D$ a structural completion for $\C$.
\end{proposition}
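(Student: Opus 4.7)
The plan is to reduce to the universal property of the strict monoidal completion $Free(\C)$, already recorded in the excerpt. Applying that property twice yields two strict residuated multi-functors: $\tilde{F}:Free(\C)\rightarrow \HH$ with $\tilde{F}\circ J_c=F$, obtained from the functor $F$; and $\bar{J}:Free(\C)\rightarrow \D$ with $\bar{J}\circ J_c=J$, obtained from the embedding $J$. By Definition~\ref{multiCatgenerated}, the hypothesis that $\D$ is generated by $J(\C)$ is precisely the surjectivity of $\bar{J}$ on objects, so the problem becomes one of pushing $\tilde{F}$ forward along $\bar{J}$.

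I would then define $\overline{F}$ as a ``section of $\bar{J}$ composed with $\tilde{F}$.'' For each $\D$-object $A$, choose a preimage $\widehat{A}$ in $Free(\C)$ with $\bar{J}(\widehat{A})\cong A$, making the canonical choice $\widehat{J(D)}:=J_c(D)$ on the image of $J$, and set $\overline{F}(A):=\tilde{F}(\widehat{A})$. For a multi-morphism $f:A\rightarrow B$ in $\D$, lift it to a multi-morphism $\widehat{f}:\widehat{A}\rightarrow \widehat{B}$ in $Free(\C)$, using that $\cup$-aggregation and $\bullet$-gluing of basic $\C$-morphisms are preserved strictly by $\bar{J}$, and set $\overline{F}(f):=\tilde{F}(\widehat{f})$. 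Preservation of $\cup,\setminus,\bullet,\circ$ and of $\bot,1_\bot$ then transfers directly from $\tilde{F}$, and the equality $\overline{F}\circ J=F$ is built into the choice of section on the image of $J$.

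The main obstacle is well-definedness on multi-morphisms: $\bar{J}$ need not be faithful, so two distinct $Free(\C)$-lifts $\widehat{f},\widehat{f}'$ of the same $f$ may coexist. The key observation is that both $\tilde{F}$ and $\bar{J}$ are the \emph{canonical} strict residuated extensions of $F$ and $J$ respectively, so any two lifts of $f$ differ only by rewriting along equations valid in $\C$ and carried into $Free(\C)$ by $J_c$; since $F$ is a functor of $\C$, the extension $\tilde{F}$ respects all of these equations, and hence the two $\HH$-images agree, at worst, up to the isomorphisms introduced by the choice of preimages $\widehat{A}$. This is exactly the source of the ``up to isomorphism'' qualifier in the statement, and it is also where the hypothesis that $\D$ be generated by $J(\C)$ is essential, since otherwise lifts $\widehat{f}$ would not exist for every $f$.

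For uniqueness, suppose $G:\D\rightarrow \HH$ is another multi-functor with $G\circ J=F$. Then $G\circ \bar{J}:Free(\C)\rightarrow \HH$ is a strict residuated multi-functor satisfying $(G\circ \bar{J})\circ J_c=G\circ J=F$, and the uniqueness clause of the universal property of $Free(\C)$ gives $G\circ \bar{J}=\tilde{F}=\overline{F}\circ \bar{J}$. Since $\bar{J}$ is surjective on objects, $G$ and $\overline{F}$ must agree on objects of $\D$ up to the isomorphisms $\bar{J}(\widehat{A})\cong A$, and their agreement on all multi-morphisms follows because both are multi-functorial and determined on the generating data $J(\C)$. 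Hence $G\cong \overline{F}$, completing the uniqueness up to isomorphism and establishing the universal property that makes $\D$ a structural completion of $\C$.
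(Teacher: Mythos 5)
Your overall route --- factoring through the strict monoidal completion, obtaining $\tilde{F}:Free(\C)\rightarrow\HH$ and $\bar{J}:Free(\C)\rightarrow\D$ from the universal property of $J_c$, and then trying to push $\tilde{F}$ forward along $\bar{J}$ --- is the natural one, and it is exactly the setup the paper itself prepares via Definition~\ref{multiCatgenerated} (the paper states the proposition without supplying a proof). Your uniqueness computation $G\circ\bar{J}\circ J_c = G\circ J = F$, hence $G\circ\bar{J}=\tilde{F}$, is correct as far as it goes.

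The genuine gap is in the lifting step, and it affects both existence and uniqueness. By Definition~\ref{multiCatgenerated}, ``$\D$ is generated by $J(\C)$'' means only that $\bar{J}$ is \emph{surjective on objects}; your construction $\overline{F}(f):=\tilde{F}(\widehat{f})$ additionally requires every multi-morphism $f:A\rightarrow B$ of $\D$ to lift along $\bar{J}$, i.e.\ a fullness property that the hypothesis does not provide. The paper's own flagship example shows the problem: for $J:Set\rightarrow Rel_\Omega$, every $Rel_\Omega$-object is an aggregate of sets, so generation on objects holds, but a general $\Omega$-valued relation is not the $\bar{J}$-image of any family of maps from $Free(Set)$; for such $f$ no lift $\widehat{f}$ exists, $\overline{F}(f)$ is undefined, and two extensions agreeing on $J(\C)$ may still differ on $f$, so your uniqueness argument (``determined on the generating data $J(\C)$'') also collapses. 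Your remark that the generation hypothesis is ``exactly where lifts would not otherwise exist'' inverts the logic: generation yields preimages of objects, not of morphisms. Separately, your well-definedness argument --- that two lifts of the same $f$ differ only by rewriting along equations imported from $\C$ --- is unsubstantiated: $\bar{J}$ may identify morphisms of $Free(\C)$ for reasons intrinsic to $\D$, and if $\tilde{F}$ fails to respect such an identification then no $\overline{F}$ with $\overline{F}\circ\bar{J}=\tilde{F}$ exists at all. To close the argument one needs a stronger reading of ``generated'' (e.g.\ $\bar{J}$ surjective on multi-morphisms, or every $\D$-morphism obtained from $J(\C)$-morphisms under $\cup$, $\bullet$ and $\circ$), which is presumably what the paper tacitly intends but which your proof, like the paper, does not establish from the stated definition.
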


A category $\C$ has the structure of multi-category if there is a multi-category $\D$ isomorphic to $\C$. By this we mean what, there is an embedding $J:\C\rightarrow\D$ such that every $\D$-object $D$ there is a $\C$-object $C$ such that $J(C)\cong D$.

Since a category $\C$ with products is isomorphic to its free product completion $\prod(\C)$ \cite{Diers77}, e.g. $\C\cong \prod(\C)$, and $\prod(\C)$ has structure of multi-category, we have:

\begin{proposition}
Every category $\C$ with products, has the structure of multi-category.
\end{proposition}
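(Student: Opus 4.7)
The plan is to take $\D=\prod(\C)$, the free product completion of $\C$, as the witnessing multi-category and to combine two observations already recorded in the free-product-completion example. First, $\prod(\C)$ is itself a multi-category: it is residuated under the object aggregator $(A_i)_I\cup(B_j)_J=(C_k)_{I\amalg J}$ with the described residuum, and its class of morphisms carries a monoidal structure given by the product $\bullet$ (which extends composition to non-composable pairs), satisfies the required boundary identities $\Box(\bar f\bullet\bar g)=\Box\bar g\cup\Box\bar f\setminus\bar g\Box$ and $(\bar f\bullet\bar g)\Box=\bar f\Box\cup\bar g\Box\setminus\Box\bar f$, and restricts to composition on composable pairs. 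Second, the canonical embedding $J_\C:\C\to\prod(\C)$, sending each object $A$ to the singleton family $(A)$ and each morphism $f$ to the singleton $(f)$, admits as an inverse the product-preserving functor $\Pi:\prod(\C)\to\C$ that takes $\bar A=(A_i)_I$ to its $\C$-product $\prod_i A_i$, provided $\C$ has products; thus under our hypothesis $\C\cong\prod(\C)$.

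Putting these together, since $\prod(\C)$ is a multi-category and is isomorphic to $\C$, the category $\C$ is itself (isomorphic to) a multi-category, which by the definition immediately preceding the proposition is exactly the statement that $\C$ carries the structure of a multi-category. Equivalently, taking $J=J_\C$ in the definition, every object $\bar A$ of $\D=\prod(\C)$ is of the form $J_\C(C)$ up to isomorphism, with $C=\prod_i A_i\in\C$ available thanks to the product hypothesis.

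The only substantive step, implicit in the earlier example, is the verification of the isomorphism $\bar A\cong J_\C(\Pi\bar A)$ inside $\prod(\C)$ for each $\bar A=(A_i)_I$: the family of projections $(\pi_i:\prod_j A_j\to A_i)_I$, described by the unique map $I\to\{\ast\}$, is a multi-morphism from the singleton $(\prod_j A_j)$ to $(A_i)_I$, and the universal property of the product in $\C$ supplies the inverse datum. I expect this to be the only place where real work is needed; once the isomorphism $\C\cong\prod(\C)$ is in hand, the conclusion is purely definitional. The hypothesis that $\C$ has \emph{all} products is used exactly here, to guarantee that $\Pi$ is defined on every object of $\prod(\C)$.
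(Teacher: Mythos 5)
Your proposal is correct and follows essentially the same route as the paper: the paper derives the proposition directly from the two facts you invoke, namely that $\prod(\C)$ carries the multi-category structure described in the free-product-completion example, and that when $\C$ has products the canonical embedding $J_\C:\C\rightarrow\prod(\C)$ yields an isomorphism $\C\cong\prod(\C)$ via the product-preserving functor $\Pi$ (citing Diers). Your explicit verification of $\bar A\cong J_\C(\Pi\bar A)$ merely fills in a detail the paper leaves implicit, so there is no substantive difference.
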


Similarly to categories, the dual of a multi-category $\C$ is a multi-category $\C^{op}$ such that for every multi-morphism $f:\Box f\rightarrow f \Box$ in $\C$ its reverse, $f^\circ: f\Box\rightarrow \Box f$, is  a multi-morphism in $\C^{op}$, and $g^\circ\bullet f\circ^=(f\bullet g)\circ$. It is a natural consequence from multi-category definition that:
\begin{proposition}
If $\C$ is a category with structure of multi-category, then also its dual $\C^{op}$ has the structure of multi-category.
\end{proposition}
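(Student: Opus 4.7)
The plan is to exhibit an explicit multi-category isomorphic to $\C^{op}$, built by dualizing the one that already witnesses the structure on $\C$. By hypothesis there is an embedding $J:\C\rightarrow \D$ with $\D$ a multi-category and $J$ essentially surjective up to isomorphism. I will claim that $\D^{op}$ is again a multi-category, and that $J^{op}:\C^{op}\rightarrow\D^{op}$ witnesses the structure on $\C^{op}$. The key observation is that the dual was already introduced in the paragraph preceding the statement, so most of the work is simply checking that the axioms survive dualization.

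First I would check the residuated (monoidal-on-objects) part. Since $\D$ and $\D^{op}$ share the same underlying class of objects, the aggregator $\cup$, its unit $\bot$, and the residuum $\setminus$ carry over literally; the associativity, unit, and symmetry isomorphisms of $\D$ become their inverses in $\D^{op}$, so $\D^{op}$ inherits a residuated monoidal-on-objects structure. Next I would transfer the monoidal structure on multi-morphisms: define $\phi \bullet_{op} \psi := (\psi\bullet\phi)^\circ$ for multi-morphisms in $\D^{op}$, with unit $1_\bot^\circ = 1_\bot$. Associativity and unit laws follow from the analogous properties of $\bullet$ in $\D$ together with the identity $g^\circ\bullet f^\circ=(f\bullet g)^\circ$ noted in the excerpt.

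The two compatibility clauses of the multi-category definition are then verified one by one. For clause (1): if $\phi=f^\circ$ and $\psi=g^\circ$ are composable in $\D^{op}$, i.e.\ $\Box_{op}\phi = \psi\Box_{op}$, then in $\D$ this reads $f\Box=\Box g$, so $g$ and $f$ are composable in $\D$; hence $\phi\bullet_{op}\psi=(g\bullet f)^\circ=(g\circ f)^\circ=\phi\circ_{op}\psi$. For clause (2), using the formula in $\D$,
\[
\Box_{op}(\phi\bullet_{op}\psi)=\Box_{op}(g\bullet f)^\circ=(g\bullet f)\Box=g\Box\cup f\Box\setminus\Box g=\Box_{op}\phi\cup\Box_{op}\psi\setminus\phi\Box_{op},
\]
and symmetrically for the target; so the second axiom holds.

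Finally I would observe that $J^{op}:\C^{op}\rightarrow\D^{op}$ is still an embedding (functoriality and faithfulness are preserved by duality), and that if every $\D$-object is isomorphic to some $J(C)$, then every $\D^{op}$-object is isomorphic to some $J^{op}(C)$, so $\C^{op}\cong\D^{op}$ in the sense of the definition preceding the proposition. The main obstacle I expect is purely bookkeeping: the formula $g^\circ\bullet f^\circ=(f\bullet g)^\circ$ in the excerpt appears with a typo and must be read correctly, and one has to be careful that $\Box$ and the dot-$\Box$ notation are swapped when passing to the opposite, but once the direction of reversal is fixed everything follows from already-established laws in $\D$.
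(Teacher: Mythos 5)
Your proposal is correct and takes essentially the route the paper intends: the paper gives no argument beyond the dual construction ($f^\circ$ with $g^\circ\bullet f^\circ=(f\bullet g)^\circ$, modulo the typo you spotted) stated just before the proposition, and your verification that $\D^{op}$ inherits the residuated structure, satisfies clauses (1) and (2), and that $J^{op}:\C^{op}\rightarrow\D^{op}$ remains an essentially surjective embedding is exactly the expansion of that remark. One slip to fix: in your clause-(2) display the final expression should be $\Box_{op}\psi\cup\Box_{op}\phi\setminus\psi\Box_{op}$, which is what your correctly computed middle term $g\Box\cup f\Box\setminus\Box g$ instantiates, rather than $\Box_{op}\phi\cup\Box_{op}\psi\setminus\phi\Box_{op}$ --- with the labels swapped the two sides differ in general (e.g.\ take $f\Box=\Box g=\{a\}$, $g\Box=\Box f=\{b\}$), though this is a transcription error, not a gap in the argument.
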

In particular, defining $\coprod(\C)$ as the dual of product completion $\prod(\C)$, $\coprod(\C)=(\prod(\C^{op}))^{op}$, if $\C$ has coproducts then $\C\cong \coprod(\C)$ and $\C$ has the structure of multi-category \cite{Adamek94}.

In this sense, for canonical embeddings $J:\C\rightarrow \prod(\C)$ and $J':\C\rightarrow \coprod(\C)$,  in the sense of Definition \ref{multiCatgenerated}, the multi-category $\prod(\C)$ is generated by $J(\C)$ and  the multi-category $\coprod(\C)$ is generated by $J'(\C)$.

Following the spirit proposed by Diers in its extension from limits to multi-limits, described in the context of product completion \cite{Diers77}. When we consider the usual definition of limit, as an initial cone, in the categorical structure of a multi-category, defined using composable morphism \cite{Borceux94}, the \emph{multi-limit} for a diagram $D:\G\rightarrow \C$ in $\C$ is the limit for a diagram in $\C$ completion. When this completion is defined by free product completion, $\D\cong \prod(\C)$, this notion coincide with the Diers' extension for limits.  However this type of extension, by structural completion of the category, is not rich enough to fulfils our needs. In the following sections we will describe an appropriated framework for vague description. For that we need to extend further the notion of structural completion, to motivate that we began by analyzing the multi-category of relations evaluated on a complete resituated lattice.

\begin{example}[Relations evaluated in $\Omega$]
The aim of this work was to present an abstract framework adequate to the description of vague structures. For that, we adopted as reference the multi-category $Rel_\Omega$, of relations evaluated in the multi-valued logic $\Omega$, having its product of multi-morphisms described using a flavor selected in $\Omega$. In the multi-category $Rel_\Omega$ each multi-morphism $f\in Rel_\Omega[A,B]$ can be interpreted as a matrix, having its rows indexed by $A$ and its columns indexed by $B$, with entries in a complete resituated lattice $\mathbf{\Omega}=(\Omega,\otimes,\Rightarrow,\wedge,\vee,\bot,\top)$.

Given the diversity of possible interpretation for ``degree of truth'' in $\Omega$ and of its use for composing relations, we see $Rel_\Omega$ as a class of structures differentiate by the way composition is defined.  A flavor for a multi-category $Rel_\Omega$ is defined by a semiring $(\Omega,\times,\top,+)$, with operations selected in the complete resituated lattice structure $\mathbf{\Omega}$, $+\in\{\oplus,\vee\}$ and $\times\in\{\otimes,\wedge\}$, such that $(\Omega,\times,\top)$ is a monoid, $(\Omega,+)$ is a semigroup and $\times$ distributes over $+$. Flavors are used to differentia ways of relating levels of dependencies between entities.

The order defined in the lattice $\Omega$, can be lifted to each homset in $Rel_\Omega$: for $f,g\in Rel_\Omega[A,B]$, $f\leq g$ iff $f(a,b)\leq g(a,b)$, for every $(a,b)\in A\times B$. Hence each homset $Rel_\Omega[A,B]$ has a top element, denoted by $\top$, such that $\top(a,b)=\top\in \Omega$, and a bottom element denoted by $\bot$, such that $\bot(a,b)=\bot\in \Omega$.

A multi-morphism from the singleton set $\ast$, $\bar{x}\in Rel_\Omega[\ast,A]$ is called a distribution, and it assigns to each $a\in A$, a truth-value $\bar{x}(a)\in \Omega$. In this sense, each endomorphism $f:\ast \rightarrow \ast$ is defined selecting a truth-value, $f(\ast,\ast)=\lambda\in \Omega$. Hence the homset $Rel_\Omega[\ast,\ast]$ is isomorphism to $\Omega$, and the algebraic structure of $\Omega$ can be used to algebrize $Rel_\Omega[\ast,\ast]$ along the isomorphism, denoted by $\ulcorner\_\urcorner:Rel_\Omega[\ast,\ast]\rightarrow \Omega$. We defined an \emph{internal logic} for endomorphisms $f,g\in Rel_\Omega[\ast,\ast]$:
\[
\begin{array}{lll}
  \ulcorner f\otimes g\urcorner=\ulcorner f\urcorner\otimes \ulcorner g\urcorner &\;&  \ulcorner f\vee g\urcorner=\ulcorner f\urcorner\vee \ulcorner g\urcorner\\
  \ulcorner f\wedge g\urcorner=\ulcorner f\urcorner\wedge \ulcorner g\urcorner &\;& \ulcorner f\Rightarrow g\urcorner=\ulcorner f\urcorner\Rightarrow \ulcorner g\urcorner \\
  \ulcorner\top\urcorner = \top &\;& \ulcorner\bot\urcorner = \bot
\end{array}
\]

Note that, each relation $f:A\rightarrow B\in Rel_\Omega$ can be presented as a distribution $f:\ast \rightarrow A\times B$, since the correspondence $f(a,b)=\lambda$ can be encoded using a distribution $\rho_f(\ast,(a,b))=\lambda$. Hence we have
\[
Rel_\Omega[A,B]\cong Rel_\Omega[\ast,A\cup B].
\]
For every multi-morphism $f\in Rel_\Omega[A, B]$ and each $\lambda\in \Omega$, we define an external product $f\downharpoonright \lambda\in Rel_\Omega[A, B]$, by $(f\downharpoonright \lambda)(a,b)=f(a,b)\times \lambda$ and we take by its transposition $f^\circ\in Rel_\Omega[B,A]$ such that $f^\circ(a,b)=f(b,a)$.

When $\Omega$ is a boolean algebra, $Rel_{\{\bot,\top\}}$ is called the multi-category of bivalente relations, and in this case $Rel_{\{\bot,\top\}}[\ast,\ast]\cong \{\bot,\top\}$. $Rel_{\{\bot,\top\}}$ is usually called the category of sets and relations\cite{Borceux94}.

Composition in $Rel_\Omega$ is defined using a selected flavor $(\Omega,\times, \top,+)$.  Given multi-morphisms $f:A\rightarrow B$ and $g:C\rightarrow D$ we define a product in $Rel_\Omega$ by the multi-morphism $$f\circ g:(\Box f \cup \Box g\backslash f \Box)\rightarrow (g \Box \cup f\Box \backslash \Box g),$$ given by the map
\[
(f\circ g)(\bar{x},\bar{z})=\sum_{\bar{y}\in\prod(f\Box\cap \Box g)}f(\bar{x},\bar{y})\times g(\bar{y},\bar{z}),
\]
where $\bar{x}\in\prod(\Box f \cup \Box g\backslash f \Box)$ and $\bar{z}\in\prod(g \Box \cup f\Box \backslash \Box g)$.
\end{example}

Note that, in  $Rel_{\{\bot,\top\}}$, all the possible flavors coincide and the product of composable multi-morphisms is the usual composition of relations.

\section{Logical extension of universal properties}

Similarity is an important concept on definition by approximation. Where the main
goal is to, based on the analyze of data sets, find patterns and regularities
on the data described by structures similar to algebraic structures. In searching for such
regularities, it is usually not enough to consider only equality or inequality
of data elements. Instead, we need to consider how similar, or different two
elements are, i.e. we have to be able to quantify how distinct to two
elements are. This notion is needed in virtually any knowledge discovery application.

How similarity between elements is defined, however, largely depends on
the type of the data. The elements considered in data modeling are often
complex, and they are described by a different number of different kinds of
features. On the other hand, on a single set of data we
can have several kinds of similarity notions.  Different similarity measures can reflect different facets of the data, and therefore, two elements can be determined to be very similar by one measure and very different by another measure. In practice, however similarity degrees have mainly an ordinal meaning. In other words it is the ordering induced by the similarity degrees between the elements that is meaningful, rather than the exact value of the degrees. We assume similarity relations evaluated in a complete lattice. The same set used to describe membership grades of elements to a set, useful on the encoding of data imprecision or uncertainty. This allows the use of membership relations and similarity relations directly for predicate construction. When these relations are evaluated in a multi-valued logic, we called it the \emph{logic of the universe of discurse}.

Despite the fact that there is no single definition for similarity, and that
one single measure seldom suits for every purpose, we try to describe
a generic framework, to the manipulation of objects having similarity and vague membership relations associated.

While our first goal, presented in Section \ref{multicat}, for the definition of multi-categories was essentially functional, as a framework for the relational interpretation for circuits. The idea associated with the notion of $\Omega$-multi-categories is its logical extension, the possibility of internalize in its structure  a multi-valued logic.

In a $\Omega$-multi-categories we assume the existence of an object such that its endomorphisms has a monoidal structure. This tries to capture the structure of $Rel_\Omega$, where for the singleton set $\ast$, $Rel_\Omega[\ast,\ast]$ has by elements endomorphisms $\ulcorner\lambda\urcorner:\ast\rightarrow \ast$ defined by each $\lambda\in \Omega$, used to internalized the logic $\Omega$. Each multi-morphism in $Rel_\Omega$, $f:A\rightarrow B$ is interpreted as a relation $\rho_f:ast\rightarrow A\cup B$, given for every $(a,b)\in A\times B$ as $\rho_f(\ast,(a,b))=f(a,b)$, defining an isomorphism, $Rel_\Omega[A,B]\cong Rel_\Omega[\ast,A\cup B]$.

The structure of a $\Omega$-multi-categories is given by:

\begin{definition}
A multi-category $\D$ is a $\Omega$-multi-category, for a monoid $\mathbf{\Omega}=(\Omega,\times,\top)$, when:
\begin{enumerate}
  \item for every pair of objects $A$ and $B$, a multi-morphism $f\in \D[A,B]$ and scalars $\lambda\in \Omega$ there is an \emph{external product} $f\downharpoonright \lambda\in \D[A,B]$, such that
      \begin{enumerate}
        \item $f\downharpoonright (\lambda\times \alpha)= (f\downharpoonright \lambda)\downharpoonright\alpha$,
        \item $f\downharpoonright \top = f$, and
        \item $(f\circ g)\downharpoonright \lambda = (f\downharpoonright \lambda)\circ g = f\circ (g\downharpoonright \lambda)$;
      \end{enumerate}
  \item for every pair of objects $A$ and $B$, there is \emph{reverse operator} defined using a isomorphism  $$(\_)^\circ: \D[A,B]\rightarrow \D[B,A],$$ such that $(f^\circ)^\circ=f$, $(f\circ g)^\circ=g^\circ\circ f^\circ$ and $1_A^\circ=1_A$;
  \item for every pair of objects $A$ and $B$, $\D[A,B]$ is partially ordered, given two multi-morphisms $f,g:A\rightarrow B$, with $f\leq g$, we have
  \begin{enumerate}
    \item $f^\circ \leq g^\circ$, and
    \item for $h:C\rightarrow A$ and $i:B\rightarrow D$, $i\circ f\circ h\leq i\circ g \circ h$;
  \end{enumerate}
  \item for every pair of objects $A$ and $B$, there is an \emph{operator for tabulation} defining an isomorphism $$\rho_{(\_)}:\D[A,B]\rightarrow \D[\ast,A\cup B].$$
  \end{enumerate}
For every object $A$, $1_A\downharpoonright \_:\;\Omega\rightarrow \D[A,A]$ defines an epimorphism, for the object $\ast$, $1_\ast\downharpoonright \_:\;\Omega\rightarrow \D[\ast,\ast]$ is an isomorphism and its inverse will be denoted by $\ulcorner-\urcorner:\;\D[\ast,\ast]\rightarrow \Omega$ and we have $\ulcorner1_\ast\urcorner=\top$, for every $f,g\in \D[\ast,\ast]$, $\ulcorner f\circ g\urcorner=\ulcorner f \urcorner\times \ulcorner g\urcorner$, $\ulcorner f^\circ\urcorner=\ulcorner f\urcorner$ and $\ulcorner f\downharpoonright \lambda\urcorner = \ulcorner f\urcorner\times \lambda$.
\end{definition}

Given a monoid $(\Omega,\times,\top)$ and a multi-category $\D$, we can generate a $\Omega$-multi-category by structural completion.

\begin{example}[Suszko's completion]
Every monoid $\Omega$ and every multi-category $\D$, with a terminal element $\ast$, can be extended to an $\Omega$-multi-category, denoted by $\Omega(\D)$. For that we weighted formally $\D$-morphisms using values from monoid $\Omega$, defining a new multi-category $\Omega(\D)$ having by objects $\D$-objects and for each $\D$-morphism $f\in\D[A,B]$ and every $\lambda\in \Omega$ we formally define weighted multi-morphisms in $\Omega(\D)$, $(f\downharpoonright\lambda):A\rightarrow B$, $(f\downharpoonright\lambda)^\circ:B\rightarrow A$, $(f\downharpoonright\lambda):\top\rightarrow A\cup B$ and $(f\downharpoonright\lambda)^\circ:\top\rightarrow B\cup A$. The product between multi-morphisms resultes from extending the product in $\D$, by making $$(f\downharpoonright\lambda_0)\comp (g\downharpoonright\lambda_1)=(f\comp g) \downharpoonright(\lambda_0\times \lambda_1).$$ Note what $\D_\Omega[\ast,\ast]=\{(1_\ast\downharpoonright\lambda):\;\lambda\in\Omega\}\cong \Omega$, and we define $\ulcorner(1_\ast\downharpoonright\lambda)\urcorner= \lambda$.

Moreover, the homset $\Omega(\D)[A,B]$ is sorted by $(f\downharpoonright\lambda_0)\leq (g\downharpoonright\lambda_1)$ if $\lambda_0\leq \lambda_1$. Hence for $(f\downharpoonright\lambda_0)\leq (g\downharpoonright\lambda_1)$, $(g\downharpoonright\lambda_1)^\circ\leq (f\downharpoonright\lambda_0)^\circ$ because $\lambda_1^\circ\leq \lambda_0^\circ$. And, if $(f\downharpoonright\lambda_0)\leq (g\downharpoonright\lambda_1)$,  $(i,\lambda_2)\circ(f,\lambda_0)\circ(h,\lambda_3)= (i\circ f\circ h,\lambda_2\times \lambda_0\times \lambda_3)$ and $(i,\lambda_2)\circ(g,\lambda_1)\circ(h,\lambda_3)= (i\circ g\circ h,\lambda_2\times \lambda_1\times \lambda_3)$, when $\lambda_0\leq \lambda_1$, since $\times$ is monotonically increasing $\lambda_2\times \lambda_0\times \lambda_3 \leq \lambda_2\times \lambda_1\times \lambda_3$, then $(i,\lambda_2)\circ(f,\lambda_0)\circ(h,\lambda_3)\leq (i,\lambda_2)\circ(g,\lambda_1)\circ(h,\lambda_3)$.

The functor $J_\Omega:\D\rightarrow \Omega(\D)$ such that $J_\Omega(A)=A$ and $J_\Omega(f)=(f\downharpoonright\top)$ defines a embedding.
\end{example}

A functor $F$ between $\Omega$-multi-categories preserves its structure if it is a multi-functor and if preserves scalar multiplication, i.e.
\[
F(f\downharpoonright \lambda) = F(f)\downharpoonright \lambda.
\]

Consider the canonical embedding on Suszko's completion of $\D$, $J_\Omega:\D\rightarrow \Omega(\D)$, for every functor $F:\D\rightarrow \HH$, where $\HH$ is an $\Omega$-multi-category there is an unique functor, up to isomorphism, $\bar{F}:\Omega(\D)\rightarrow \HH$ which preserves the structure of $\Omega$-multi-category, and such that
\[
\bar{F}\circ J_\Omega = F.
\]
The functor $\bar{F}$ is defined, by $\bar{F}(A)=F(A)$ and for multi-morphisms $f:A\rightarrow B$ and $\lambda \in \Omega$ by
\[
\bar{F}(f\downharpoonright \lambda)=F(f)\times \lambda \in \D[F(A),F(B)].
\]

Note that, if we assume the existence of a functor $G:\Omega(\D)\rightarrow \HH$ in the above conditions, for every $f:A\rightarrow B$ and $\lambda\in \Omega$ we have
\[
G(f\downharpoonright \lambda)=G(f)\downharpoonright \lambda=F(f)\downharpoonright \lambda=\bar{F}(f\downharpoonright \lambda),
\]
then $G=\bar{F}$. In this sense we named $\Omega(\D)$ the \emph{free $\Omega$-multi-category } completion of multi-category $\C$.

A particular useful $\Omega$-multi-category is $Fam(Vect_\mathds{K})$ the multi-category of families of finite-dimensional vector spaces over a field $\mathds{K}=(\mathds{K},\times, 1, +, 0)$.

\begin{example}[$Fam(Vect_\mathds{K})$]
The structure of $\mathds{K}$-multi-category of $Fam(Vect_\mathds{K})$ is induced by the usual external product and the linear transformation transposition. For a multi-morphism between families of vectorial spaces $f:(V_i)_I\rightarrow (U_j)_J$ given by \[
f=\left(f_{ij}\right)_{I\times J},
\]
where each $f_{ij}:V_i\rightarrow V_j$ is a linear transformation, we defined
\[
f\downharpoonright \lambda=\left(f_{ij}\times \lambda\right)_{I\times J}\text{ and }
f^\circ=\left(f_{ji}\times \lambda\right)_{J\times I}.
\]
We have in this case, $\left(f_{ij}\right)_{I\times J}\downharpoonright (\lambda \times \alpha)= \left(f_{ij}\times\lambda \times \alpha \right)_{I\times J}=\left(f_{ij}\times\lambda \right)_{I\times J}\downharpoonright \alpha = (\left(f_{ij} \right)_{I\times J}\downharpoonright \lambda) \downharpoonright \alpha$, $\left(f_{ij}\right)_{I\times J}\downharpoonright 1= \left(f_{ij}\times 1\right)_{I\times J}= \left(f_{ij}\right)_{I\times J}$ and $(\left(f_{ij}\right)_{I\times J}\circ \left(g_{kl}\right)_{K\times L})\downharpoonright \lambda = (\left(f_{ij}\right)_{I\times J}\circ \left(g_{kl}\right)_{K\times L})\times \lambda = (\left(f_{ij}\right)_{I\times J}\times \lambda)\circ \left(g_{kl}\right)_{K\times L}= (\left(f_{ij}\right)_{I\times J}\downharpoonright \lambda)\circ \left(g_{kl}\right)_{K\times L}$.
Naturally we have ${\left(f_{ij}\right)_{I\times J}^\circ}^\circ = \left(f_{ij}\right)_{I\times J}$, $(\left(f_{ij}\right)_{I\times J}\circ \left(g_{kl}\right)_{K\times L})^\circ = \left(g_{kl}\right)_{K\times L}^\circ \circ \left(f_{ij}\right)_{I\times J}^\circ$ and ${\left(1_{ij}\right)_{I\times I}}^\circ= \left(1_{ij}\right)_{I\times I}$. Concerning the homset $Vect_\mathds{K}[(A_i)_I,(B_j)_J]$, assuming the dimension of each vectorial space $A_i$ and $B_j$ are respectively $n_i$ and $m_j$, each linear transformation has a matricial representation $$Vect_\mathds{K}[(A_i)_I,(B_j)_J]\cong Rel_\mathds{K}[n_1\times \ldots \times n_k, m_1\times \ldots \times m_l]$$
and since $$Rel_\mathds{K}[1, n_1\times \ldots \times n_k\times m_1\times \ldots \times m_l]\cong Rel_\mathds{K}[n_1\times \ldots \times n_k, m_1\times \ldots \times m_l],$$ it follows $$Vect_\mathds{K}[(A_i)_I,(B_j)_J]\cong Vect_\mathds{K}[\ast, (A_i)_I\cup(B_j)_J].$$ And naturally, since $dim(\ast)=1$, we have $$Vect_\mathds{K}[\ast,\ast]\cong \mathds{K}.$$
\end{example}

In analogy to this example, in a $\Omega$-multi-category $\D$, we called \emph{scalar} to each morphism in $\D[\ast,\ast]$, denoted in the following as $\lambda_0,\lambda_1,\ldots$ A multi-morphism from $\D[\ast,A]$ or $\D[A,\ast]$ are called \emph{distribution} and usually denoted by $\bar{x}, \bar{y}, \ldots$ and multi-morphisms in $\D[A,B]$ are denoted by letters $f,g,h,\ldots$

\begin{example}[$Rel_\Omega$]
Given an CRlattice $\mathbf{\Omega}=(\Omega,\otimes,\Rightarrow,\wedge,\vee,\bot,\top)$, where we select a flavor, given by a semiring $(\Omega,\times,\top,+)$ used on composition definition in $Rel_\Omega$.

A $\Omega$\emph{-set} is a triple $(A,\bar{x},\alpha)$, denoted as $\bar{x}:\alpha$, with $A$ a set, $\bar{x}:A\rightarrow \Omega$ a distribution defined by a map and $\alpha:A\times A\rightarrow \Omega$ a similarity relation evaluated in $\Omega$, such that $\alpha \circ \bar{x}\leq \alpha$.

If $A=\{(A_i,\bar{x}_i,\alpha_i)\}_I$ and $B=\{(B_j,\bar{y}_j,\beta_j)\}_J$ are sets of $\Omega$-sets then $f:A\rightarrow B$ is a multi-morphism between $\Omega$-sets when it is a map $f:\prod_IA_i\times \prod_JB_j\rightarrow\Omega$, such that $f(a,b)\times(\Pi_I\bar{x}_i)(a)\leq (\Pi_J\bar{y}_j)(b)$, $f(a,b)\times(\Pi_I\alpha_i)(a,c)\leq f(a,c)$ and $(\Pi_J\beta_j)(a,b)\times f(b,c) \leq f(a,c)$. When this is the case we write $\Box f =\{(A_i,\bar{x}_i,\alpha_i)\}_I$ and $f\Box  =\{(B_j,\bar{y}_j,\beta_j)\}_J$.
\[
\small
\xymatrix @=7pt {
\ast \ar[rr]^{\Pi_I\bar{x}_i} \ar[rrdd]_{\Pi_J\bar{y}_j}&&\prod_IA_i\ar[rr]^{\prod_I\alpha_i}\ar[dd]_f&&\prod_IA_i\ar[dd]_f\\
&&&&\\
&&\prod_JB_j\ar[rr]^{\prod_J\beta_j}&&\prod_JB_j\\
 }
\]

In a $\Omega$-set $(A,\bar{x},\alpha)$, when $\bar{x}$ and $\alpha$ are bivalent evaluations, $\bar{x}$ describes a subset of $A$ and $\alpha$ is an equivalent relation. The top element $\top:\ast\rightarrow A$, is defined by $\top(a)=\top$, for every $a\in A$.

Given $\Omega$-sets $\bar{x}:\alpha$ and $\bar{y}:\beta$, a $\Omega$\emph{-map} is a particular type of multi-morphism $f:(\bar{x}:\alpha)\rightarrow (\bar{y}:\beta)$ defined by a map $f:A\rightarrow B$ such that, for each $a,b\in A$, $$\bar{x}(a)\leq \bar{y}(f(a))\text{ and }\alpha(a,b)\leq \beta(f(a),f(b)).$$

When a relation $\bar{x}:\ast\rightarrow A$ is a map between sets, $\bar{x}$ describes the selection of an element in $A$, and we write in this case $!\bar{x}:\top\rightarrow A$ or $!\bar{x}\in A$. In this sense, by $!\bar{x}\in A\times B$ we define the selection of a pair in $A\times B$.

Independently of $Rel_\Omega$ flavor, for every $\Omega$-set $(A,\bar{x},\alpha)$, its identity is the identity map $1_A:A\rightarrow A$ in $A$.
The class of sets of $\Omega$-sets has a monoidal structure defined by set union, diference an having by identity the empty set.

Since $f\in Rel_\Omega[A,B]$ is by definition a map $f:\Pi A\times \Pi B\rightarrow \Omega$ and $$\ast\times\Pi A\times \Pi B\cong \Pi A\times \Pi B,$$ follows $Rel_\Omega[\ast,A\cup B]\cong Rel_\Omega[A,B]$. And for each homset $Rel_\Omega[A,B]$ we have
\[
f\leq g\text{ iff } f(x,y)\leq g(x,y), \text{ for every }x\in A,y\in B.
\]
For every pair of sets $A$ and $B$ the top element $\top$ in set of relations $Rel_\Omega[A,B]$ is the relation given by $\top(x,y)=\top$. And for every relation $f\in Rel_\Omega[A,B]$, we consider its reverse the relation described by $f^\circ(y,x)=f(x,y)$, which defines an isomorphism between $Rel_\Omega[A,B]$ and $Rel_\Omega[B,A]$.

For every multi-morphism $f\in Rel_\Omega[A,B]$ and each $\lambda\in \Omega$, we have by external product $(f\downharpoonright \lambda)(x,y)= f(x,y)\times \lambda$, for every $(x,y)\in A\times B$. Since $f(x,y)\in \Omega$, $\lambda\in \Omega$ and $\Omega$ have structure of semiring
\[
(f\downharpoonright \lambda \times \alpha)=(f\downharpoonright \lambda) \downharpoonright \alpha, \; f\downharpoonright \top = f \text{ and } (f \circ g) \downharpoonright \lambda = (f \downharpoonright \lambda )\circ g = f\circ (g  \downharpoonright \lambda )\]

Moreover, the possibility of encoding every relation evaluated in $\Omega$ as a table describes the isomorphism  $Rel_\Omega[A,B]\cong Rel_\Omega[\ast,A\cup B]$, and justifies the identification of truth values with endomorphisms,  $Rel_\Omega[\ast,\ast]\cong \Omega$.
\end{example}

Note however that, $\Omega$-multi-categories like $Fam(Vect_\mathds{K})$ and $Rel_\Omega$ have also a natural additive structure on its homsets. For every pair of objects $A$ and $B$, and parallel multi-morphisms $f,g\in \D[A,B]$ there is a multi-morphism $f+g\in \D[A,B]$ such that:
\begin{enumerate}
  \item $f\downharpoonright(\lambda+\alpha)=f\downharpoonright\lambda+f\downharpoonright\alpha$,
  \item $(f+g)\downharpoonright \lambda=f\downharpoonright \lambda+g\downharpoonright \lambda$,
  \item $h\circ(f+g)=h\circ f+h\circ g$ if $\Box h = f \Box= g\Box $,
  \item $(f+g)\circ h = f\circ h+g\circ h$ if $\Box f= \Box g = h\Box$,
  \item $(f+g)^\circ=f^\circ+g^\circ$,
  \item $\rho_{(f+g)}=\rho_f+\rho_g$,
  \item if $f\leq g$ and $h\in \D[A,B]$, $f+h\leq g+h$, and
  \item if $f,g\in \D[\ast,\ast]$, then $\ulcorner f+g\urcorner=\ulcorner f \urcorner+\ulcorner g\urcorner$.
\end{enumerate}
The multi-norphism $f+g\in \D[A,B]$ is defined in the multi-category $Fam(Vect_\mathds{K})$ as the sum of linear transformation. On $Rel_\Omega$ it results from extending the additive operator, selected on its flavor, to relations.

\begin{definition}[Additive $\Omega$-multi-category]
We named \emph{additive $\Omega$-multi-category} to a  $\Omega$-multi-category with an additive  operator on multi-morphisms satisfying above conditions.
\end{definition}

Generic $\Omega$-multi-categories can be used to describe logic extensions to the notion of limit, colimt and for diagram commutativity, for that we see the degree of similarity between two multi-morphisms as a relation evaluated on a complete lattice. For that we define what we mean by a $\Omega$-multi-category generated by a category:

\begin{definition}
Let $\D$ be a category, $\Omega$ a monoid and consider $J:\C\rightarrow \D$ an embedding on the $\Omega$-multi-category $\D$. $\D$ is a $\Omega$-multi-category generated by $J(\C)$ if $\D$ is a multi-category generated by $J(\C)$, with the structure of $\Omega$-multi-category.
\end{definition}

An $\Omega$-object, in a $\Omega$-multi-category $\D$, is a triple $(A,\bar{x},\alpha)$ defined using an $\D$-object $A\in\D_0$, a distribution $\bar{x}:\ast\rightarrow A$ and a similarity relation $\alpha:A\rightarrow A$, i.e. a morphism satisfying:
$$1_A\leq \alpha,\;\alpha = \alpha^\circ,\text{ and }\alpha \circ \alpha \leq \alpha.$$
A similarity $\alpha$ is called an \emph{equivalence} when $\alpha \circ \alpha = \alpha$.

\begin{example}
A \emph{generalized metric space} is a set $X$ together with a mapping
\[
d(\_,\_):X\times X\rightarrow \mathds{R}
\]
which satisfy
\begin{enumerate}
  \item $d(x,x)=0$, and
  \item $d(x,z)\leq d(x,y) + d(y,z)$.
\end{enumerate}
 The real number $d(x,y)$ will be called the distance from $x$ to $y$. The pair $(X,d)$ defines a \emph{pseudometric space} if $d$ is a generalized metric such that
 $$d(x,y)=d(y,x).$$
Note that every vectorial space with scalar product is a pseudometric space in particular every euclidian space.

Let $x,y\in \mathds{R}^n$ and let $\langle\cdot,\cdot\rangle$ denote the scalar product in $\mathds{R}^n$. Apart from the linear kernel $k(x,y)=\langle x,y\rangle$ and the normalized linear kernel $k(x,y)=\frac{\langle x,y\rangle}{\|x\|\|y\|}$ the two most frequent kernels on vectorial spaces are the polynomial kernel and the Gaussian RBF kernel\cite{thomas04}. Given two parameters $l\in \mathds{R}$, $p\in \mathds{N}^+$ the polinomial kernel is defined as $k(x,y)=(\langle x,y\rangle+l)^p$ and the Gaussian RBF kernel is defined as $k(x,y)=e^{-l\|x-y\|^2}$.

Positive defined kernel functions can be used on the definition of similarity relations. A simple strategy is define the \emph{distance measured by a kernel} $k$ as
\[
d_k(x,y)=\sqrt{k(x,x)-2k(x,y)+k(y,y)}.
\]
It is a basic result from linear algebra that, if $k$ is positive define, the $d_k$ is a pseudometric.
This allow to use the embedding in a linear feature space by a kernel to define a pseudo-metric for structured data expressed as basic terms \cite{thomas04}. For every pseudo-metric $d(\_,\_):X\times X\rightarrow \mathds{R}$, and every real parameter $l>1$
the map
\[
s_d(x,y)=l^{-d(x,y)}
\]
is a similarity relation.
\end{example}

We simplify notation representing by $\bar{x}:\alpha$ the $\Omega$-object $(A,\bar{x},\alpha)$. We used in the following $\alpha:A$  for denoting the class of distributions in $A$ equipped with the similarity $\alpha$. In this sense the identity $1_A:A\rightarrow A$ defines an equivalence relation in $A$, and a class of multi-morphisms $1_A:A$ having by element for instance $\top:1_A$, given by top distribution $\top:\ast\rightarrow A$.

$\Omega$-objects $\bar{x}:\alpha$ are interpreted as vague structures, where its similarity relation $\alpha:A$ quantifies how identical two elements are,  and  the distribution $\bar{x}$ quantifies the degree of an element belongings to the $\Omega$-object.  $\Omega$-objects are related using bimodules.

In abstract algebra  a bimodule is an abelian group that is both a left and a right module, such that the left and right multiplications are compatible. This notion was extended to enriched categories by B\'{e}nabou using the name of distributor.
Here we adopted the notion of bimodule proposed by Maxwell Kelly on the special case of categories enriched over commutative unital quantale \cite{Kelly82}. Bimodules are relations, defined between  $\Omega$-object in a  $\Omega$-multi-category, preserving the degrees of vagueness in the membership and on similarity. More precisely a morphism $f:(\bar{x}:\alpha)\rightarrow (\bar{y}:\beta)$ between $\Omega$-objects, is a bimodule if it is defined by a morphism $f:A\rightarrow B$ in $\D$, such what
\begin{center}
 $f\circ \bar{x} \leq \bar{y}$,
 $f\circ \alpha\leq f$, and
 $\beta\circ f\leq f$.
\end{center}
The composition in $\D$ is compatible with the structure of a bimodule, since for morphisms $$f:(\bar{x}:\alpha)\rightarrow (\bar{y}:\beta)\text{ and }g:(\bar{y}:\beta)\rightarrow (\bar{z}:\gamma),$$ the morphism $g\circ f$ is a morphism between $\Omega$-objects because, $$g\circ f\circ \bar{x}\leq g \circ \bar{y}\leq \bar{z},\; g\circ f\circ \alpha \leq g \circ f\text{ and }\beta \circ g\circ f \leq g \circ f.$$

Every $\Omega$-object $\bar{x}:\alpha$ has by identity $1_A:A\rightarrow A$ in $\D$. The identity $1_\ast$, defined on object $\ast$ is a similarity relation $1_\ast\leq 1_\ast$, $1_\ast=1_\ast^\circ$ and $1_\ast\circ1_\ast\leq1_\ast$. Defining the $\Omega$-object $(\ast,\top,1_\ast)$ and for every morphism $f\in \D[\ast,\ast]$ defines a multi-morphism $f:(\top:1_\ast)\rightarrow(\top:1_\ast)$.

\begin{definition}[$\D_\Omega$]
The class of $\Omega$-objects and bimodules, with the composition in $\D$, defines a category denoted by $\D_\Omega$.
\end{definition}
In this category we assume that different objects may represent the same entity. For that we defined a order on $\Omega$-objects, having in consideration:

\begin{lemma}
Let $R$ be a bivalente relation defined between $\Omega$-objects, in $\D_\Omega$, such that $(A,\bar{x},\alpha)R(B,b,\beta)$, if there is a morphism $f:(A,\bar{x},\alpha)\rightarrow(B,\bar{y},\beta)$ such that \[f^\circ\circ f= 1_A, f\circ \bar{x} = \bar{y}\text{ and }\beta=f\circ \alpha \circ f^\circ.\]
The relation $R$ is a partial order between $\Omega$-objects.
\end{lemma}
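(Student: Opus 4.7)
The plan is to verify that $R$ satisfies the three axioms of a partial order---reflexivity, transitivity, and antisymmetry---where the last is understood up to isomorphism of $\Omega$-objects, in the spirit of the earlier remark that ``different objects may represent the same entity.''

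For reflexivity I would take the identity morphism $1_A:A\to A$ as the witness for $(A,\bar{x},\alpha)R(A,\bar{x},\alpha)$. The $\Omega$-multi-category axiom $1_A^\circ = 1_A$ gives $1_A^\circ\circ 1_A = 1_A$, while the unit laws in $\D$ yield $1_A\circ\bar{x}=\bar{x}$ and $1_A\circ\alpha\circ 1_A^\circ = \alpha$ directly.

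For transitivity, suppose $f$ witnesses $(A,\bar{x},\alpha)R(B,\bar{y},\beta)$ and $g$ witnesses $(B,\bar{y},\beta)R(C,\bar{z},\gamma)$. I would show that $h=g\circ f$ witnesses $(A,\bar{x},\alpha)R(C,\bar{z},\gamma)$. Using the reverse axiom $(g\circ f)^\circ = f^\circ\circ g^\circ$, a direct calculation gives
\[
h^\circ\circ h = f^\circ\circ g^\circ\circ g\circ f = f^\circ\circ 1_B\circ f = 1_A,
\]
together with $h\circ\bar{x} = g\circ\bar{y} = \bar{z}$ and $h\circ\alpha\circ h^\circ = g\circ f\circ\alpha\circ f^\circ\circ g^\circ = g\circ\beta\circ g^\circ = \gamma$.

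For antisymmetry, assume $(A,\bar{x},\alpha)R(B,\bar{y},\beta)$ via $f$ and $(B,\bar{y},\beta)R(A,\bar{x},\alpha)$ via $g$. Repeating the same style of calculation on the round trips, the endomorphisms $g\circ f$ on $A$ and $f\circ g$ on $B$ satisfy precisely the conditions used for reflexivity: they fix the corresponding distributions ($g\circ f\circ\bar{x} = g\circ\bar{y} = \bar{x}$) and conjugate the respective similarities to themselves ($g\circ f\circ\alpha\circ f^\circ\circ g^\circ = g\circ\beta\circ g^\circ = \alpha$), while $(g\circ f)^\circ(g\circ f) = f^\circ\circ g^\circ\circ g\circ f = f^\circ\circ f = 1_A$ and symmetrically on $B$. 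Interpreting equality of $\Omega$-objects up to isomorphism, $f$ and $g$ therefore exhibit $(A,\bar{x},\alpha)$ and $(B,\bar{y},\beta)$ as the same element of the quotient, completing antisymmetry. The main obstacle is precisely this last step: the stated axioms do not force $g\circ f = 1_A$ on the nose, so the claim is best read modulo the equivalence of isomorphic $\Omega$-objects, which appears to be the convention the lemma implicitly invokes.
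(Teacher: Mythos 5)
Your reflexivity and transitivity computations are correct and are surely what the lemma intends (the paper itself states the lemma without any proof, so these calculations fill a genuine blank): $1_A$ witnesses reflexivity via $1_A^\circ=1_A$, and for transitivity the identities $h^\circ\circ h=f^\circ\circ g^\circ\circ g\circ f=1_A$, $h\circ\bar{x}=\bar{z}$ and $h\circ\alpha\circ h^\circ=\gamma$ for $h=g\circ f$ go through exactly as you wrote, using the axiom $(f\circ g)^\circ=g^\circ\circ f^\circ$. (One pedantic caveat: $1_A$ satisfies the bimodule inequality $1_A\circ\alpha\leq 1_A$ only when $\alpha=1_A$, since similarities satisfy $1_A\leq\alpha$; but the paper explicitly declares $1_A$ to be the identity of every $\Omega$-object, so your choice of witness is consistent with its conventions.)

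The gap is in your antisymmetry step, and it is real: showing that the round trips $g\circ f$ and $f\circ g$ satisfy the reflexivity-type identities does \emph{not} exhibit an isomorphism between $(A,\bar{x},\alpha)$ and $(B,\bar{y},\beta)$. The hypothesis $f^\circ\circ f=1_A$ does not force $f\circ f^\circ=1_B$; already in $Rel_{\{\bot,\top\}}$ an injective non-surjective map has $f^\circ\circ f=1_A$ while $f\circ f^\circ\leq 1_B$ strictly, so $f$ need not be invertible. Nor does having such embeddings in both directions yield an isomorphism of structured objects: a Cantor--Schr\"{o}der--Bernstein-type conclusion fails once structure must be preserved (mutually embeddable ordered sets need not be order-isomorphic), so ``equality up to isomorphism'' is not actually established by your computations. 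And if instead you quotient by the equivalence ``$XRY$ and $YRX$,'' then antisymmetry holds by construction of the quotient and your round-trip calculations are superfluous --- the argument becomes circular as a proof of isomorphism. The defensible statement, which matches the paper's remark that ``different objects may represent the same entity,'' is that your reflexivity and transitivity computations show $R$ is a \emph{preorder} on $\Omega$-objects, which induces a partial order on the quotient by mutual relatedness; as literally stated, with equality of $\Omega$-objects on the nose or even up to isomorphism, antisymmetry does not follow from the given axioms.
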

\[
\small
\xymatrix @=7pt {
\ast \ar[rr]^{\bar{x}} \ar[rrdd]_{\bar{y}}&&A\ar[rr]^\alpha\ar[dd]_{f}&&A\ar[dd]_{f}\\
&&&&\\
&&B\ar[rr]^{\beta}&&B\\
 }
\]

Consider in $Rel_{\{\bot,\top\}}$ a bivalent equivalent relation $R$ in $A$, and let $[x]_R$ be the equivalent class of $x$. We will denote the set of equivalent classes as $A/R$. Every $\Omega$-object $(A,\bar{x},R)$ have by refinement $(A/R,[\_]_R\circ\bar{x},1_{A/R})$, this order is defined by the map $[\_]_R:A\rightarrow A/R$, assigning to each element $x$ its equivalence class $[x]_R$, since $1_{A/R}=[\_]_R\circ R \circ[\_]_R^\circ$. We extended this to multi-value logics defining:

\begin{definition}
In a $\Omega$-multi-category $\D$, the $\Omega$-object $(B,b,\beta)$ is a refinement of $(A,a,\alpha)$ if there is a morphism $f:(A,a,\alpha)\rightarrow(B,b,\beta)$ such that \[f^\circ \circ f= 1_A,\;f\circ a = b\text{ and }\beta= f\circ \alpha \circ f^\circ,\]
in this case we write $(B,b,\beta)\leq (A,a,\alpha)$.
\end{definition}

Note that, composition of bimodules is compatible with this order defined for $\Omega$-objects. To show that we assume what two bimodules are equivalent if in some sense describe similar relations between similar $\Omega$-objects. Given $\Omega$-objects $(B,\bar{y},\beta) \leq (A,\bar{x},\alpha)$ and $(B',\bar{w},\beta')\leq (A',\bar{z},\alpha')$, where the congruences are described using bimodules $f:(\bar{x}:\alpha)\rightarrow (\bar{y}:\beta)$ and $f':(\bar{z}:\alpha')\rightarrow (\bar{w}:\beta')$, respectively. For every bimodule $h:(\bar{x}:\alpha)\rightarrow (\bar{z}:\alpha')$ the relation $f'\circ h \circ f^\circ$ is a bimodule between $(\bar{y}:\beta)$ and $(\bar{w}:\beta')$, since it is conservative $f'\circ h \circ f^\circ \circ \bar{y} = f'\circ h \circ \bar{x}\leq f'\circ \bar{z} = \bar{w}$, $f'\circ h \circ f^\circ \circ \beta = f'\circ h \circ f^\circ \circ f \circ \alpha \circ f^\circ = f'\circ h \circ \alpha \circ f^\circ \leq f'\circ h \circ f$ and $\beta' \circ f'\circ h \circ f^\circ =f'\circ \alpha' \circ f'^\circ \circ f'\circ h \circ f^\circ= f'\circ \alpha' \circ h \circ f^\circ\leq f' \circ h \circ f^\circ$. More precisely:

\begin{definition}
Two bimodules $h:(\bar{x}:\alpha)\rightarrow (\bar{z}:\alpha')$ and $t:(\bar{y}:\beta)\rightarrow (\bar{w}:\beta')$ are equivalent if there are bimodules $f:(\bar{x}:\alpha)\rightarrow (\bar{y}:\beta)$ and $g:(\bar{z}:\alpha')\rightarrow (\bar{w}:\beta')$, defining respectively congruences $(B,\bar{y},\beta)\leq (A,\bar{x},\alpha)$ and $(B',\bar{w},\beta')\leq (A',\bar{z},\alpha')$, such that
\[
t=g\circ h\circ f^\circ.
\]
When this is the case we write $t\leq h$.
\end{definition}

Naturally, the composition in $\D_\Omega$ preserves the congruence relation between bimodules, if $h\leq t$ and $g\leq f$ then, when defined, $h\circ g\leq f\circ t$.

The functor $J:\D\rightarrow \D_\Omega$, given by $J(A)=(A,\top,1_A)$ and $J(f)=f$ defines an embedding.  The category $\D_\Omega$ has structure of multi-category, when the object aggregator and its residuum are functorial. They must define for every object $A$ a functor $A\times\_:\D\rightarrow\D$ having by right adjunct $A\setminus\_:\D\rightarrow\D$. In this case e can define where the object aggregator in $\D_\Omega$ and it is $$(A,\bar{x},\alpha)\cup (B,\bar{y},\beta)= (A\cup B,(\bar{x}\cup \bar{y})\circ \top,\alpha\cup \beta),$$ with $A\cup B$ the aggregation of $A$ and $B$ in $\D$, $\alpha\cup \beta:A\cup B\rightarrow A\cup B$ results from parallel morphism aggregation in $\D$, and $(\bar{x}\cup \bar{y})\circ \top$ is the result of composing $\bar{x}\cup \bar{y}:\ast\cup\ast\rightarrow A\cup B$ and $\top:\ast \rightarrow \ast\cup\ast$ the top morphism in $\D[\ast, \ast\cup\ast]$. Every functor $ (A,\bar{x},\alpha)\cup\_:\D_\Omega\rightarrow \D_\Omega$ has by right-adjoint $\_\backslash(A,\bar{x},\alpha) :\D_\Omega\rightarrow \D_\Omega$ given by $(B,\bar{y},\beta)\backslash(A,\bar{x},\alpha) = (B\backslash A ,\bar{x}\backslash A,\alpha\backslash A)$ defined using the right-adjunction to the tensor product $\cup$ in $\D$. The product between bimodules is defined by the natural extension to multi-morphism product in $\D$. For bimodules $f:(A,\bar{x},\alpha)\rightarrow (B,\bar{y},\beta)$ and $g:(C,\bar{z},\gamma)\rightarrow (D,\bar{w},\delta)$, its product $g\circ f$ is defined in $D_\Omega$ as the morphism
\[
g\circ f: (\Box f\cup \Box g \backslash f \Box, \bar{x}', \alpha')\rightarrow (g \Box\cup f \Box \backslash \Box g, \bar{z}', \beta'),
\]
where distributions
\[
\bar{x}':\ast\rightarrow \Box f\cup \Box g \backslash f \Box\text{ and }
\bar{z}':\ast\rightarrow g \Box\cup f \Box \backslash \Box g,
\]
are given by $a'=(\bar{x}\cup \bar{z}\backslash (\Box g \backslash f \Box))\circ \top$ and $b'=(\bar{z}\cup \bar{c}\backslash (f \Box \backslash \Box g))\circ \top$, respectively, and the similarities involved are $\alpha':\Box f\cup \Box g \backslash f \Box$ and $\beta': g \Box\cup f \Box \backslash \Box g$, given respectively by $\alpha'=\alpha\cup \gamma \backslash (\Box g \backslash f \Box)$ and $\beta'=\delta\cup \beta \backslash (f \Box \backslash \Box g)$.

This allows to write:

\begin{proposition}
 Consider a $\Omega$-multi-category $\D$. The category $\D_\Omega$ defined by $\Omega$-objects and conservative bimodules in $\D$ has the structure of multi-category when $\D$ is a closed category, i.e. if the object aggregator and its residuum are functorial. If $\D$ is a multi-category generated by $\C$, then $\D_\Omega$ also is generated by $\C$.
\end{proposition}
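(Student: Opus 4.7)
The strategy is to lift the multi-category structure of $\D$ to $\D_\Omega$ through the canonical embedding $J:\D\to\D_\Omega$ given by $J(A)=(A,\top,1_A)$ and $J(f)=f$, verifying along the way that the object aggregator, its residuum, the morphism product, and all their axioms survive the enrichment with a distribution and a similarity. The main checks split cleanly into three blocks: the monoidal structure on objects, the monoidal structure on bimodules, and the transfer of generation.

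First, I would check that the aggregator $(A,\bar{x},\alpha)\cup(B,\bar{y},\beta)=(A\cup B,(\bar{x}\cup \bar{y})\circ\top,\alpha\cup\beta)$ defined in the excerpt is a genuine bifunctor on $\D_\Omega$. Reflexivity $1_{A\cup B}\leq \alpha\cup\beta$, symmetry $(\alpha\cup\beta)^\circ=\alpha\cup\beta$ and transitivity $(\alpha\cup\beta)\circ(\alpha\cup\beta)\leq\alpha\cup\beta$ all drop out of bifunctoriality of $\cup$ in $\D$ applied componentwise to the corresponding identities for $\alpha$ and $\beta$. The unit is $(\bot,\top,1_\bot)$. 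On morphisms, given bimodules $f,g$, the parallel aggregate $f\cup g$ is again a bimodule since the aggregated distribution and similarity split along the two components. The residuum $(B,\bar{y},\beta)\setminus(A,\bar{x},\alpha)$ is read off from the right-adjoint of $A\cup\_$ in $\D$ (which exists because $\D$ is assumed closed), equipped with distribution and similarity transported through that adjunction; functoriality of $\setminus$ in $\D_\Omega$ then follows from functoriality of $\setminus$ in $\D$.

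Second, for the morphism product on bimodules I would adopt the formula written immediately before the proposition and verify that the declared source $(\Box f\cup\Box g\setminus f\Box,\bar{x}',\alpha')$ and target $(g\Box\cup f\Box\setminus\Box g,\bar{z}',\beta')$ really make $g\circ f$ a bimodule. The source/target set equations are inherited from the corresponding formulas in $\D$, and the linking axioms of a multi-category (composable morphisms satisfying $g\bullet f=g\circ f$, associativity of $\bullet$, identity $1_\bot$) transfer directly from $\D$. The bimodule inequalities $h\circ \bar{x}'\leq \bar{z}'$, $h\circ \alpha'\leq h$ and $\beta'\circ h\leq h$ are then a diagram chase: on the shared interface $f\Box\cap\Box g$ the bimodule conditions of $f$ and $g$ compose through the internal sum/product used to define the product, while on the complementary pieces the distributions and similarities pass through untouched via the corresponding restrictions in $\D$.

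Third, for generation, suppose $\D$ is generated by $J'(\C)$ for an embedding $J':\C\to\D$. Then $J\circ J':\C\to\D_\Omega$ is an embedding into a multi-category, and for any $\Omega$-object $(A,\bar{x},\alpha)$ the underlying $A$ decomposes in $\D$ as $A\cong\bigcup_I J'(D_i)$ for some finite family $(D_i)_I$ in $\C$. Aggregating the corresponding $\Omega$-objects $(J'(D_i),\top,1_{J'(D_i)})$ in $\D_\Omega$ recovers $(A,\top,1_A)$, and the target $\Omega$-object is obtained from it by refinement via $(\bar{x},\alpha)$ in the sense of the preceding lemma; this shows $\bar{J\circ J'}:Free(\C)\to\D_\Omega$ is surjective on objects. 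The main obstacle is the middle block, where one must ensure that the bimodule inequalities survive both the aggregator, which combines two independent compatibilities, and the morphism product, which contracts along a shared interface; this is where closedness of $\D$ is really used, since it guarantees that the restrictions of distributions and similarities along $\setminus$ behave coherently with respect to the product of bimodules.
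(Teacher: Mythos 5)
Your first two blocks are essentially the paper's own proof: the paper establishes the proposition constructively, by exhibiting exactly the data you adopt --- the aggregator $(A,\bar{x},\alpha)\cup(B,\bar{y},\beta)=(A\cup B,(\bar{x}\cup\bar{y})\circ\top,\alpha\cup\beta)$, the residuum $(B,\bar{y},\beta)\backslash(A,\bar{x},\alpha)=(B\backslash A,\bar{x}\backslash A,\alpha\backslash A)$ obtained from the right adjoint to the tensor in $\D$ (which is indeed the only place closedness is used), and the product of bimodules with the displayed source and target distributions and similarities --- and it leaves the bimodule inequalities at the same level of routine verification as your ``diagram chase.'' One small caveat: your claim that $1_{A\cup B}\leq\alpha\cup\beta$ etc.\ ``drop out of bifunctoriality'' actually needs monotonicity of $\cup$ with respect to the order on homsets, not mere functoriality; this holds in the intended examples but should be stated.

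The genuine gap is in your third block. The image of the extension $\overline{J\circ J'}:Free(\C)\rightarrow\D_\Omega$ consists only of aggregates of crisp objects $(J'(D_i),\top,1_{J'(D_i)})$, hence only of objects of the form $(A,\top,1_A)$ up to isomorphism; a general $\Omega$-object $(A,\bar{x},\alpha)$ is not isomorphic in $\D_\Omega$ to $(A,\top,1_A)$, and saying it is ``obtained by refinement'' does not place it in the image, because refinement is an order relation on $\Omega$-objects, not an isomorphism --- so surjectivity on objects, which is what Definition~\ref{multiCatgenerated} literally demands, does not follow. Worse, the refinement step is itself false in general: $(A,\bar{x},\alpha)\leq(A,\top,1_A)$ requires a morphism $f$ with $f^\circ\circ f=1_A$, $\bar{x}=f\circ\top$ and $\alpha=f\circ 1_A\circ f^\circ=f\circ f^\circ$, which forces $\alpha\circ\alpha=f\circ f^\circ\circ f\circ f^\circ=f\circ f^\circ=\alpha$, i.e.\ $\alpha$ must be an equivalence tied to $f$; a generic similarity that is not idempotent cannot arise this way. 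What your argument genuinely yields is that the \emph{underlying} $\D$-object of every $\Omega$-object decomposes through $\C$ --- which is evidently how the paper intends its second sentence, since its definition of a generated $\Omega$-multi-category reduces generation to the underlying multi-category structure and it offers no argument beyond the construction --- but to make your block three correct you must either restate the claim at that level or explicitly enlarge the notion of generation so that it is closed under refinement.
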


Denoting by $\D_\Omega[a:\alpha,b:\beta]$ the class of every multi-morphism $f:(a:\alpha)\rightarrow(b:\beta)$, it is a subclass of $\D_\Omega[\alpha:A,\beta:B]$, the class of every multi-morphism $f:(A,a,\alpha)\rightarrow (B,b,\beta)$.  Then we have $$\D_\Omega[a:\alpha,b:\beta]\subset \D_\Omega[\alpha:A,\beta:B]\subset \D[A,B]. $$

We will use $\D_\Omega$ as our generic framework to describe structures using similarity relations. For that we define:

\begin{definition}\label{similarity}
 In a category with structure of $\Omega$-multi-category, two distributions $\bar{x},\bar{y}:\alpha$ are $\lambda$-similar, with $\lambda\in \Omega$, if $$\ulcorner \bar{y}^\circ \circ \alpha \circ \bar{x}\urcorner = \lambda$$
and in this case we write $[\bar{x}=\bar{y}]_\alpha=\lambda$.
\end{definition}

By definition in a $\Omega$-multi-category every multi-morphism $f:(\bar{x}:\alpha)\rightarrow (\bar{y}:\beta)$ defines a distribution $\rho_f:\alpha\cup \beta$.
\[
\small
  \begin{array}{cc}
\xymatrix @=7pt {
\ast \ar[rr]^{\bar{x}} \ar[rrdd]_{\bar{y}}&&A\ar[rr]^\alpha\ar[dd]_{f}&&A\ar[dd]_{f}\\
&&&&\\
&&B\ar[rr]^{\beta}&&B\\
 }\;\;
&
\small
\;\;
\xymatrix @=7pt {
\ast \ar[rr]^{\rho_f}&&A\cup B\ar[rr]^{\alpha\cup \beta}&&A\cup B\\
 }
  \end{array}
\]
 Considering multi-morphisms $f,g\in \D_\Omega[\alpha:A,\beta:B]$, using Definition \ref{similarity},  we have $$[f=g]_{\alpha\cup\beta}= {\rho_g}^\circ \circ (\alpha\cup \beta) \circ  \rho_f,$$ defining a similarity relation since $[f=f]_{\alpha\cup\beta}= {\rho_f}^\circ \circ (\alpha\cup \beta) \circ  \rho_f\geq  {\rho_f}^\circ \circ  \rho_f \geq 1_\top$, $[f=f]_{\alpha\cup\beta}^\circ= {\rho_f}^\circ \circ (\alpha\cup \beta) \circ  \rho_f= [f=f]_{\alpha\cup\beta}$, $[f=g]_{\alpha\cup\beta}^\circ=( {\rho_g}^\circ \circ (\alpha\cup \beta) \circ \rho_f)^\circ =  {\rho_f}^\circ \circ (\alpha\cup \beta) \circ  \rho_g= [g=f]_{\alpha\cup\beta}$ and $[g=h]_{\alpha\cup\beta}\circ [f=g]_{\alpha\cup\beta} = {\rho_h}^\circ \circ (\alpha\cup \beta) \circ {\rho_g} \circ  {\rho_g}^\circ \circ (\alpha\cup \beta)\circ  \rho_f \leq  {\rho_h}^\circ \circ (\alpha\cup \beta) \circ (\alpha\cup \beta) \circ  {\rho_f} \leq   {\rho_h}^\circ \circ (\alpha\cup \beta) \circ  {\rho_f}=[f=h]_{\alpha\cup\beta}$.

This notion of similarity between multi-morphisms can be seen as a conservative extension to equality in the sense that, morphism equality in $\D_\Omega$ is defined by identity relation, two morphisms $f,g:(\bar{x}:1_A)\rightarrow (\bar{y}:1_B)$ are equal in $\D_\Omega$, $f=g$, iff $[f=g]_{1_{A\cup B}}=\top$ in $\D_\Omega$.

 Using the fact what, by definition in a $\Omega$-multi-category, every homset $\D_\Omega[A,B]$ is partially sorted, following Freyd and Scedrov \cite{Freyd90}, a morphism $f\in \D_\Omega[A,B]$ is called:
\begin{enumerate}
  \item \emph{entire} if $1_A\leq f^\circ\circ f$, and
  \item \emph{simple} if $f\circ f^\circ\leq 1_B$.
\end{enumerate}
A morphism in $\D_\Omega$ is a \emph{map} when it is  entire and simple. When a multi-morphism $f\in \D_\Omega[A,B]$ or a distribution $\bar{x}\in \D_\Omega[\ast,A]$ are defined by maps, we express this by writing $!f: (\bar{x}:\alpha)\rightarrow (\bar{y}:\beta)$ or $!\bar{x}\in A$.

\begin{example}
Let $f: (\bar{x}:\alpha)\rightarrow (\bar{y}:\beta)$ be a morphism $f$ in $Rel_{\{\bot,\top\}}$, then
\begin{enumerate}
  \item $f$ is entire iff $\forall x\exists y: xfy$, and
  \item $f$ is simple iff whenever $xfy$ and $xfz$, we have $y=z$.
\end{enumerate}
\end{example}

In multi-categories like $Rel_\Omega$, a map $!a\in A$, describes the selection of an element in the set $A$. Note also that $\ulcorner\lambda\urcorner:\ast\rightarrow \ast$ is a map iff $\lambda=\top$.

\begin{example}[$Rel_\Omega$]
Consider $Rel_\Omega$ the $\Omega$-multi-category having by flavor $\Omega$ the semiring defined using conjunction $x\otimes y=\max(x+y-1,0)$ and disjunction $x\oplus y=\min(x+y,1)$ from \emph{{\L}ukasiewicz logic}. A distribution $\bar{x}$ in the set $A$ with 4 element can be described using a $4\times 1$ matrix. The distribution $\bar{x}=[1\; 2/3\; 1/3\; 0]^\circ$ is entire $$1_\ast\leq [1\; 2/3\; 1/3\; 0]\circ [1\; 2/3\; 1/3\; 0]^\circ = [1],$$ but fails to be simple  $$[1\; 2/3\; 1/3\; 0]^\circ\circ [1\; 2/3\; 1/3\; 0]=\left[
                                      \begin{array}{cccc}
                                        1 & 2/3 & 1/3 & 0 \\
                                        2/3 & 1/3 & 0 & 0 \\
                                        1/3 & 0 & 0 & 0 \\
                                         0 & 0 & 0 & 0 \\
                                      \end{array}
                                    \right].
$$
\end{example}

In a $\Omega$-multi-category we distinguished a crisp substructure where the computation of categorical definition of limit and colimit take place, assumed governed by the classic bivalent logic.  If $\D$ is a $\Omega$-multi-category we denoted by $\D^\ast$ the subcategory described by all maps in $\D$. $\D^\ast$ has by composition the obvious restriction on the product in $\D$, since the product of composable maps is a map, $!f\circ !g=!(f\circ g)$. In the following the category $\D^\ast$ is called the \emph{crisp subcategory} of $\D$. In particular we have:

\begin{proposition}
For every flavor $\Omega$, $Rel^\ast_\Omega$ the crisp full subcategory of $Rel_\Omega$, is isomorph to $Set$.
\end{proposition}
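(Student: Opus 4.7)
The plan is to construct mutually inverse functors $\Phi: Set \rightarrow Rel^\ast_\Omega$ and $\Psi: Rel^\ast_\Omega \rightarrow Set$, both identity on objects since $Rel_\Omega$ and $Set$ share the class of sets as objects. The functor $\Phi$ sends a function $\varphi: A \rightarrow B$ to its characteristic relation $\chi_\varphi$ defined by $\chi_\varphi(x,y) = \top$ if $y = \varphi(x)$ and $\chi_\varphi(x,y) = \bot$ otherwise, while $\Psi$ sends a crisp multi-morphism $f$ to the function $x \mapsto y_x$ where $y_x$ is to be characterized as the unique element of $B$ with $f(x, y_x) = \top$.

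First I would verify that $\chi_\varphi$ is indeed crisp and that $\Phi$ is functorial. For entireness, $(\chi_\varphi^\circ \circ \chi_\varphi)(x,x) \geq \chi_\varphi(x,\varphi(x))^{\times 2} = \top \times \top = \top$, and for simplicity, when $y \neq y'$ every summand $\chi_\varphi(x,y) \times \chi_\varphi(x,y')$ vanishes because at least one factor is $\bot$. The identity $\chi_{1_A} = 1_A$ is immediate, and in $(\chi_\psi \circ \chi_\varphi)(x,z) = \sum_y \chi_\varphi(x,y) \times \chi_\psi(y,z)$ only the term $y = \varphi(x)$ contributes, yielding $\top \times \chi_\psi(\varphi(x), z)$, which equals $\top$ precisely when $z = \psi(\varphi(x))$, so $\chi_{\psi\comp\varphi} = \chi_\psi \comp \chi_\varphi$.

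The main obstacle is showing that $\Psi$ is well-defined: for every crisp $f$ and every $x \in A$ there must exist a \emph{unique} $y \in B$ with $f(x,y) = \top$ and $f(x,y') = \bot$ for $y' \neq y$. Simplicity applied at off-diagonal pairs gives $\sum_x f(x,y) \times f(x,y') \leq \bot$; since in every admissible flavor the additive identity coincides with $\bot$ and the sum is positive (no cancellation), each term satisfies $f(x,y) \times f(x,y') = \bot$. Entireness at $x$ reads $\top \leq \sum_y f(x,y)^{\times 2}$; here one argues, by case analysis on the flavor $\times \in \{\otimes, \wedge\}$ and $+ \in \{\oplus, \vee\}$, that some $f(x, y_0)^{\times 2} \geq \top$. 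This step is immediate when $+ = \vee$, and for $+ = \oplus$ it follows from the {\L}ukasiewicz-style computation that simplicity bounds $f(x,y) + f(x,y') \leq \top$ for $y \neq y'$, forcing all but one summand to fall below the idempotency threshold. Combined with $a \times a \leq a \times \top = a$, valid in any CRlattice for $\times \in \{\otimes, \wedge\}$, we obtain $f(x, y_0) = \top$. Simplicity then gives $f(x, y') = \top \times f(x, y') = \bot$ for $y' \neq y_0$, so $y_0$ is unique.

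With $\Psi$ well-defined, the identities $\Psi \comp \Phi = 1_{Set}$ and $\Phi \comp \Psi = 1_{Rel^\ast_\Omega}$ are immediate from the constructions, completing the isomorphism. The only genuinely delicate step is the flavor-by-flavor verification that entireness forces a $\top$ entry; all remaining verifications are formal consequences of $\top$ being the unit of $\times$ and $\bot$ the additive identity of the chosen semiring.
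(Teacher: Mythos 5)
The paper states this proposition without proof, so there is nothing to match your argument against; judged on its own terms, your overall architecture is the natural one, and several pieces are solid: $\Phi$ sending a function to its $\{\bot,\top\}$-valued graph is indeed crisp and functorial in every flavor (your computation that only the $y=\varphi(x)$ term survives in the composite uses only that $\top$ is the unit of $\times$, that $a\times\bot=\bot$, and that $\bot$ is the unit of $+$), and the \emph{uniqueness} half of the well-definedness of $\Psi$ is correct: since $+$ is monotone with unit $\bot$, simplicity does force $f(x,y)\times f(x,y')=\bot$ for all $x$ and all $y\neq y'$, and once one knows $f(x,y_0)=\top$ this collapses every other entry of the row to $\bot$.

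The genuine gap is the \emph{existence} half, precisely the step you flag as ``immediate when $+=\vee$.'' It is not immediate, and in fact it is false for a general CRlattice: $\bigvee_y f(x,y)\times f(x,y)=\top$ in a complete lattice does not produce a single $y_0$ with a $\top$ term, and simplicity does not repair this whenever $\Omega$ contains complementary elements. Concretely, let $\Omega$ be the four-element Boolean algebra with atoms $u,v$ (so $u\wedge v=\bot$, $u\vee v=\top$), take the flavor $(\Omega,\wedge,\top,\vee)$, and let $A=\{x\}$, $B=\{y_1,y_2\}$ with $f(x,y_1)=u$, $f(x,y_2)=v$. Then $(f\circ f^\circ)(y_1,y_2)=u\wedge v=\bot$, so $f$ is simple, and $(f^\circ\circ f)(x,x)=u\vee v=\top$, so $f$ is entire; yet $f$ has no $\top$ entry and is not the graph of any function, so your $\Psi$ is undefined on it. Worse, counting shows $Rel^\ast_\Omega[\{x\},\{y_1,y_2\}]$ has four elements for this $\Omega$ (the pairs $(\top,\bot),(\bot,\top),(u,v),(v,u)$), so no identity-on-objects isomorphism with $Set$ can exist here at all; for $\Omega$ a power set $P(S)$ the crisp subcategory behaves like an $S$-indexed power of $Set$ rather than $Set$ itself. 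The trouble is that your ``flavor-by-flavor'' case analysis varies only the choice of $+\in\{\oplus,\vee\}$ and $\times\in\{\otimes,\wedge\}$ but never the underlying lattice, and that is exactly where the claim lives: the row dichotomy you need does hold when $\Omega$ is a chain (your Łukasiewicz-style computation for $+=\oplus$, and the analogous ones for Gödel and product logic on $[0,1]$, do go through, since there $a\times b=\bot$ together with entireness forces a unique row maximum equal to $\top$), but not for every complete residuated lattice. Note also that the paper never defines $\oplus$ beyond the Łukasiewicz example, so your ``idempotency threshold'' step is MV-specific rather than general. As written, then, the existence step is a real gap, and closing it requires an additional hypothesis on $\Omega$ (e.g., linearity, or join-irreducibility of $\top$) that the proposition as stated does not provide.
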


The interpretation of $\Omega$ as the set of truth values used to govern a $\Omega$-multi-category impose some restriction. This happens when, for instance, we try to modeling data with attributes having by domain structures with distinct, non-isomorph, multi-valued logics $\Omega_0$ and $\Omega_1$. We can show that the category defined by multi-categories and its functors has finite products. More over, if $\C_0$ and $\C_1$ are respectively an $\Omega_0$-multi-category and an $\Omega_1$-multi-category, the product of categories $\C_0\times\C_1$ is an $\Omega_0\times\Omega_1$-multi-category. Since, if $\Omega_0$ and $\Omega_1$ are CRlattices, then the cartesian product $\Omega_0\times \Omega_1$ have a natural structure of CRlattice. In this sense we assume that every logic, associated with each attribute involved on a modeling problem, should be imbedding in a common logic $\Omega$ used on the definition of  a $\Omega$-multi-category used as modulation universe. More precisely, if the problem uses atribules with logics $\Omega_0$ and $\Omega_1$ the modulation universe must be governed by logic  $\Omega_0\times \Omega_1$, where the CRlattice $\Omega_0$ is immersed in $\Omega_0\times \Omega_1$ by the CRlattice homomorphism $h(\lambda)=(\lambda,\top)$.

\subsection{Vague limit}
Given a morphism $f:(\bar{x}:\alpha)\rightarrow(\bar{y}:\beta)$ in $\D_\Omega$, where $\alpha:A$ and $\beta:B$, we represented its evaluation for the pair of distributions $(\bar{x},\bar{y})$ as a distribution with support $A\cup B$ defined by
$$f(\bar{x},\bar{y})=\bar{y}\;^\circ\circ f\circ \bar{x}.$$
When $\bar{x}$ and $\bar{y}$ are maps they select elements in $A$ and $B$ respectively. If $f(!\bar{x},!\bar{y})=\top$ we write, as usual, $f(\bar{x})= \bar{y}$.

\begin{definition}\label{componentes}
Let $\D$ be a $\Omega$-multi-category. When $\D^\ast$ has products, for distributions $!\bar{x}\in A$ and $!\bar{y}\in B$, the unique distribution $\bar{z}:\ast\rightarrow A\times B$, such that $\pi_1\circ \bar{z} = !\bar{x}$ and $\pi_2\circ \bar{z} = !\bar{y}$, is denoted by $\bar{x}\times \bar{y} \in A\times B$.
\end{definition}

Let $\D$ be a $\Omega$-multi-category, and $D:\G\rightarrow\D^\ast$ a diagram, with vertices $(A_i)_I$. The limit of $D$ in $\D^\ast$ is defined as a limit cone $(Lim\;D,(!f_i)_I)$. When this limit exists in $\D^\ast$ we called to $Lim\;D$ the \emph{local limit} of $D$ in $\D$.

Consider now $\D$ a $\Omega$-multi-category with its crisp full subcategory $\D^\ast$ complete. Every diagram $D:\G\rightarrow\D$ with finite vertices $(A_i)_I$, having by limit $(Lim\;D,(!f_i)_I)$ defines a distribution  $\overline{lim}\;D\in \prod_IA_i$. For that, by definition of product, there is a unique morphism $!l:Lim\;D\rightarrow \prod_IA_i$ in $\D^\ast$ such that for every $i\in I$, $\pi_i\circ l=f_i$. Selecting the top distribution $\top\in \D[\ast, Lim\;D]$, we denote by $\overline{lim}\;D$ the distribution defined as $\overline{lim}\;D=!l\circ \top$. For every $!\bar{x}\in Lim\;D$, we can find distributions $!\bar{x}_i\in A_i$, for each $i\in I$, such that $$\pi_i\circ l\circ \bar{x}=\bar{x}_i.$$
When we assign a similarity $\alpha_i$ to each object $A_i$, the morphism $\Pi_I\alpha_i$ is a similarity in $\prod_IA_i$. This similarity defines a $\Omega$-object in $\D_\Omega$ described by the triple $$(\prod_IA_i,\overline{lim}\;D, \Pi_I\alpha_i).$$
This structure can be extended for every cone $(R,(!f_i)_I)$, where $R$ is a $\D$-object and for each $i\in I$, $!f_i:R\rightarrow A_i$ is a map. Let $!l:R\rightarrow \prod_iA_i$ be the unique map such that, for every $i\in I$, $!\pi_i\circ !l= !f_i$. Using the top distribution $\top:\ast\rightarrow R$, we define
\[
F_\top(R,(!f_i)_I)=l\circ \top \in \prod_IA_i,
\]
and the triple $(\prod_IA_i,F_\top(R,(f_i)_I), \Pi_I\alpha_i)$ is an $\Omega$-object in $\D_\Omega$.
\[
\small
\xymatrix @=7pt {
 &&A_i&&\\
 &&&&\\
 R\ar[uurr]^{f_i}\ar[rr]^l&&\prod_iA_i\ar[uu]^{\pi_i}\\
 &&&&\\
 \ast\ar[uurr]_{F_\top(R,(f_i)_I)}\ar[uu]_\top&&&&\\
 }
\]

Let  $\D^\ast$ be a complete category, $D:\G\rightarrow\D$ a diagram with vertices $(A_i,\bar{x}_i,\alpha_i)_I$, and a pair $(R,(f_i)_I)$, with $R=(R,\bar{x},\alpha)$ a $\Omega$-object and multi-morphisms $f_i:(R,\bar{x},\alpha)\rightarrow (A_i,\bar{x}_i,\alpha_i)$ in $\D$. By Definition \ref{similarity} the pair  $(R,(f_i)_I)$ is $\lambda$-similar to $\overline{lim}\; D$, when $$[\overline{lim}\;D\;=\;F_\top(R,(f_i)_I)]_{\prod_i\alpha_i}\geq \lambda.$$

In this framework the limit of a diagram can be extended to the notion of limit of a multi-diagram. For that we must note that, every distribution $\bar{x}\in \prod_{I}A_i$ can be extended to  a distribution $\bar{y}\in \prod_{J}A_j$, with $(A_i)_I\subset (A_j)_J$, given by $\bar{y}=!\pi^\circ \circ \bar{x}$, where $!\pi: \prod_{J}A_j\rightarrow \prod_{I}A_i$ is the obvious projection.
\[
\small
\xymatrix @=7pt {
\ast \ar[rr]^{\bar{x}}&& \prod_{I}A_i \ar[rr]^{\pi^\circ}&& \prod_{J}A_j\\
 }
\]
This type of extension simplifies the use and the manipulation of multi-morphisms, and we called it the \emph{canonical extension} of $\bar{x}$ to $\prod_{J}A_j$.

Considering a multi-diagram in $\D_\Omega$, $D:\G\rightarrow\D_\Omega$. This multi-diagram is defined by a multi-diagram  $D':\G\rightarrow\D$, where we selected for its vertices $(A_i)_I$ distributions $(\bar{x}_i)_I$ and similarities $(\alpha_i)_I$, and each multi-morphism $D'(f):\bigcup_IA_i\rightarrow \bigcup_JA_j$ is assigned to a  bimodule $D(f):\bigcup_I(A_i,\bar{x}_i,\alpha_i)\rightarrow \bigcup_J(A_j,\bar{x}_j,\alpha_j)$ in $\D_\Omega$.

\begin{definition}[Vague limit]\label{weightedlimit}
Let $\D_\Omega$ be a $\Omega$-multi-category with local products and $D:\G\rightarrow\D_\Omega$ a multi-diagram, with vertices $(A_i,\bar{x}_i,\alpha_i)_I$. The vague limit of $D$, is a $\Omega$-object $(\prod_{i\in I}A_i,\overline{lim}\;D,\Pi_{I}\alpha_i)$ defined by distribution $\overline{lim}\;D:\ast\rightarrow\prod_{I}A_i$ given by
\[
\overline{lim}\;D=\Pi_{f\in \G}\overline{\rho_{D(f)}}(\bar{x}),
\]
where $\overline{\rho_{D(f)}}$ is the canonical extension of $\rho_{D(f)}$ to $\prod_{I}A_i$.
\end{definition}

When the multi-diagram $D:\G\rightarrow\D_\Omega$ is a diagram defined using maps, $!\bar{x}\in Lim\;D$ iff $\overline{lim}\;D=\top$.
\[
\small
\xymatrix @=7pt {
 \ast\ar[rr]_{!\bar{x}}\ar[ddrr]_{\overline{lim}\;D}&&Lim\;D\ar[dd]_l\\
 &&&&\\
&&\prod_IA_i\\
 }
\]
In order to measure the quality of a structural approximation using limits we used the notion of similarity.

\begin{definition}\label{similLim}
Given a multi-diagram $D:\G\rightarrow\D_\Omega$ with vertices $(A_i,\bar{x}_i,\alpha_i)_I$, and a pair $(R,(f_i)_I)$, with $(R,\bar{x},\prod_i\alpha_i)$ a $\Omega$-object and multi-morphisms $f_i:(R,r,\alpha)\rightarrow (A_i,\bar{x}_i,\alpha_i)$ in $\D_\Omega$. The pair $(R,(f_i)_I)$ is a $\lambda$-limit of $D$ if $$[\overline{lim}\;D\;=\;F_\top(R,(f_i)_I)]_{\Pi_I\alpha_i}\geq \lambda.$$
\end{definition}

Vague limits in $Rel_\Omega$ can be seen as a logic extension as limits in $Set$ as presented in the following example.

\begin{example}[$Rel_\Omega$]
Every diagram $D:\G\rightarrow Set$, with vertices $(A_i)_I$, defines a multi-diagram in $Rel_\Omega$ using the embedding $J:Set\rightarrow Rel_\Omega$, and defined by $J\circ D:\G\rightarrow Rel_\Omega$, having by vertices $(A_i,\top, 1_{A_i})_I$. The limit cone of $D$ in $Set$,$(Lim\;D,(\alpha_i))_I$, defines a cone in $Rel_\Omega$, having by vertex $(Lim\; D,\top,1_{Lim\;D})$, where $Lim\;D$ can be expressed as a subset of $\prod_I A_i$, see \cite{Borceux94}, given by
\begin{equation}
Lim\;D=\{(\ldots,x_i,\ldots,x_j,\ldots)\in \prod_{I}A_i:\forall_{D(f):A_i\rightarrow A_j}D(f)(x_i)=x_j\}.
\end{equation}\label{limitf}

\[
\small
\xymatrix @=7pt {
\ast \ar[rr]^{\top} \ar[rrdd]_{\overline{lim}\;D}&&Lim\;D\ar[rr]^{1}\ar[dd]_{J(!l)}&&Lim\;D\ar[dd]_{J(!l)}\\
&&&&\\
&&\prod_IA_i\ar[rr]^{\Pi_I\alpha_i}&&\prod_IA_i\\
 }
\]
Consider $!l:Lim\;D\rightarrow \prod_IA_i$ the inclusion in $Set$, the relation $J(!f)\circ \top\in \prod_IA_i$ defines a distribution in $\prod_IA_i$, denoted by $\overline{lim}\;D$. For every distribution $\bar{x}\in \prod_IA_i$, using Definition \ref{similarity}, its similarity with $\overline{lim}\;D$ is given by $[\bar{x}=\overline{lim}\;D]_{\Pi_I\alpha_i},$ when we fixed a similarity relations $\alpha_i$, one for each $A_i$.

In $Rel_\Omega$ we extended the notion of limit in $Set$. According to Definition \ref{weightedlimit}, for every multi-diagram $D:\G\rightarrow Rel_\Omega$ with vertices $(A_i)_I$
\[
(\overline{lim}\;D)(\bar{x})=\Pi_{f\in \G}\overline{\rho_{D(f)}}(\bar{x}),
\]
where the product is computed using $Rel_\Omega$ flavor and each $\overline{D(f)}$ is the canonical extension of $D(f)$ to $\prod_{I}A_i$. A relation $\bar{x}:\ast \rightarrow \prod_{I}A_i$ is the $\lambda$-limit for multi-diagram $D:\G\rightarrow Rel_\Omega$ if $$[\bar{x} = \overline{lim}\;D]\geq \lambda.$$ Then since limits in $Set$ are given by \ref{limitf} we have:

\begin{theorem}
If $D:\G\rightarrow Set$ is a diagram in the category $Set$, with vertices $(A_i)_I$ and limit $L$ then the canonical embedding defines a multi-diagram $J\circ D:\G\rightarrow Rel_\Omega$ and its vague limit is $L$, $\overline{lim}\;(J\circ D)(\bar{x})=\top$ iff $\bar{x}\in L$ i.e.
\[
[L = \overline{lim}\;(J\circ D)]_1=\top.
\]
\end{theorem}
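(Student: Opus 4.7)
The plan is to reduce the vague limit formula to a finite logical conjunction and then invoke the explicit description of $Set$-limits. First I would unfold the definition: by Definition \ref{weightedlimit}, for the embedded multi-diagram $J\circ D:\G\rightarrow Rel_\Omega$ the vague limit distribution is
\[
\overline{lim}\;(J\circ D)=\Pi_{f\in \G}\overline{\rho_{J(D(f))}},
\]
where each $J(D(f))$ is the map $D(f):A_i\rightarrow A_j$ seen as a bivalent relation in $Rel_\Omega$, evaluating to $\top$ when $D(f)(x_i)=x_j$ and to $\bot$ otherwise. The similarities and distributions attached by $J$ are $1_{A_i}$ and $\top$, so the target $\Omega$-object in the construction is $(\prod_I A_i,\overline{lim}\;(J\circ D),1_{\prod_I A_i})$.

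Next I would evaluate a putative map $!\bar{x}:\ast\rightarrow \prod_I A_i$, which picks a tuple $(x_i)_I\in\prod_IA_i$. The canonical extension $\overline{\rho_{J(D(f))}}$ of an arrow $f:A_i\rightarrow A_j$ to the full product depends only on the two coordinates $x_i$ and $x_j$, because extension is defined via the obvious projection $\pi^\circ$ which is the identity on those coordinates and top-padded elsewhere. Hence $\overline{\rho_{J(D(f))}}(\bar{x})=\top$ exactly when $D(f)(x_i)=x_j$, and $\bot$ otherwise. Since both admissible flavors satisfy $\top\times\top=\top$ and $\bot\times\lambda=\bot$ on the bivalent sublattice $\{\bot,\top\}\subset\Omega$, the product $\Pi_{f\in\G}$ collapses to a (possibly infinite) logical conjunction of bivalent factors.

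Combining the two observations, $\overline{lim}\;(J\circ D)(\bar{x})=\top$ iff $D(f)(x_i)=x_j$ for every arrow $f:A_i\rightarrow A_j$ in $\G$, which by the explicit characterization of limits in $Set$ recalled in equation \eqref{limitf} is precisely the condition $\bar{x}\in L$. To conclude $[L=\overline{lim}\;(J\circ D)]_1=\top$ I would apply Definition \ref{similarity} on the crisp similarity $1_{\prod_I A_i}$: the characteristic distribution of $L$ and $\overline{lim}\;(J\circ D)$ take the value $\top$ on the same tuples and $\bot$ elsewhere, so $\overline{lim}\;(J\circ D)\;^\circ\circ 1_{\prod_I A_i}\circ \chi_L$ is $\top$ as an endomorphism of $\ast$.

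The main obstacle I anticipate is the bookkeeping for the canonical extension when $\G$ contains several arrows with overlapping source/target objects: one must check that the tabulation $\rho_{J(D(f))}$ of a bivalent map, once extended to the whole product $\prod_I A_i$ and composed with the flavor product, retains its bivalent and coordinatewise character. Once that reduction to classical conjunction is justified, the theorem is immediate from the standard description of $Set$-limits and the embedding $Set\hookrightarrow Rel^\ast_\Omega$ established earlier.
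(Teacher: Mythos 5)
Your proposal is correct and follows essentially the same route as the paper, which derives the theorem directly by unfolding Definition \ref{weightedlimit} for the embedded diagram and matching the resulting condition against the explicit description of $Set$-limits in equation \eqref{limitf}; your explicit verification that the flavor product collapses to classical conjunction on the bivalent sublattice and that the canonical extension acts coordinatewise simply makes precise steps the paper leaves implicit. The only detail worth noting is that the final similarity computation $[L=\overline{lim}\;(J\circ D)]_1$ via Definition \ref{similarity} tacitly assumes $L$ nonempty, a gloss shared by the paper itself.
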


In this sense we see vague limits as a conservative extension in $Rel_\Omega$ to the notion of limit in $Set$.
\end{example}

Bellow we present some common examples of vague limits.
\begin{example}[Vague product]
A discrete diagram $D$ in $Rel_\Omega$ defined using two $\Omega$-objects $(\bar{x}:\alpha)$ and $(\bar{y}:\beta)$, having by support sets $A$ and $B$, respectively, has by weight limit $(A\times B, \bar{x}\times \bar{y}, \alpha\times\beta)$ where
\begin{equation}
(\bar{x}\times \bar{y})(a)= \bar{x}(a)\times \bar{y}(a)\text{ and } (\alpha\times\beta)((a,b),(a',b'))=\alpha(a,a')\times \beta(b,b').
\end{equation}
\end{example}

\begin{example}[Vague equalizers]
A diagram $D$ in $Rel_\Omega$ defined using two parallel morphisms $f,g:(\bar{x}:\alpha)\rightarrow (\bar{y}:\beta)$, with $\alpha:A$ and $\beta:B$ has by limit $(A\times B, \overline{lim}\;D, \alpha\times\beta)$ described by a relation having by support $A\times B$,  given by
\begin{equation}
(\overline{lim}\;D)(a,b)= f(a,b)\times g(a,b).
\end{equation}
\end{example}

\begin{example}[Vague pullback]
A diagram  $D$ in $Rel_\Omega$ defined by $f:(\bar{x}:\alpha)\rightarrow (\bar{z}:\gamma)$, and $g:(\bar{y}:\beta)\rightarrow (\bar{z}:\gamma)$, with $\alpha:A$, $\beta:B$ and $\gamma:C$
has by limit $(A\times B\times C, \overline{lim}\;D,\alpha\times\beta\times\gamma)$ given by
\begin{equation}
(\overline{lim}\;D)(a,b,c)= f(a,c)\times g(b,c).
\end{equation}
\end{example}

\begin{example}\label{example1}
Let $f:A\times B\rightarrow C, g:A\times B\rightarrow C\times D\text{ and } h:A\times C\rightarrow E,$ be morphisms, with supports describe by $\Omega$-sets $\alpha:A, \beta:B, \gamma:C, \delta:D\text{ e } \epsilon:E,$ where $A=B=C=D=\{0,1\}$ and $\alpha=\beta=\gamma=\delta=\epsilon=1_{\{0,1\}\times\{0,1\}}$ the identity relation. If we assume each morphism described by the tables bellow defined in $Rel_{[0,1]}$  when governed by product logic (in this tables the missing cases are assumed to have weighted zero).
\begin{center}
\begin{tabular}{ccc}

$f:\;$\begin{tabular}{|c|c||c||c|}
  \hline
  A & B & C & $\Omega$ \\
  \hline
  1 & 0 & 0 & 1 \\
  0 & 1 & 0 & 1/2 \\
  1 & 1 & 0 & 1/2 \\
  0 & 0 & 0 & 1 \\
  0 & 0 & 1 & 1 \\
  1 & 1 & 1 & 1 \\
  \hline
\end{tabular}&

$g:\;$\begin{tabular}{|c|c||c|c||c|}
  \hline
  A & B & C & D & $\Omega$ \\
  \hline
  0 & 1 & 0 & 1 & 1 \\
  1 & 1 & 0 & 1 & 1/2 \\
  0 & 0 & 0 & 1 & 1 \\
  1 & 1 & 1 & 0 & 1/2 \\
  \hline
\end{tabular}&

$h:\;$\begin{tabular}{|c|c||c||c|}
  \hline
  A & C & E & $\Omega$ \\
  \hline
  1 & 1 & 1 & 1/2 \\
  0 & 0 & 1 & 1 \\
  1 & 0 & 1 & 1/2 \\
  \hline
\end{tabular}
\end{tabular}
\end{center}
Morphisms $f,g,h$ define a diagram $D$, in $Rel_{[0,1]}$, having by limit a relation $\overline{Lim}\;D$ with support $A\times B\times C\times D$ such that, for $(1,0,0,0,1)$ and $(1,1,0,1,1)$, we have respectively,\\
$(\overline{lim}\;D)(1,0,0,0,1)= f(1,0,0)\times g(1,0,0,0)\times h(1,0,1)=1\times 1\times 0\times 1/2 = 0,$ and\\
$(\overline{lim}\;D)(1,1,0,1,1)= f(1,1,0)\times g(1,1,0,1)\times h(1,0,1)=1\times 1/2\times 1/2\times 1/2 = 1/8$. The distribution $\overline{lim}\;D$ can be described by the following table, where the missing cases are assumed to have weight $\bot$.
\begin{center}
\begin{tabular}{c}
$\overline{Lim}\;D:\;$
\begin{tabular}{|c|c|c|c|c||c|}
  \hline
  A & B & C & D & E & $\Omega$ \\
  \hline
  0 & 1 & 0 & 1 & 1 & $1/2\times 1 \times 1= 1/2$ \\
  1 & 1 & 0 & 1 & 1 & $1/2\times 1/2 \times 1/2=1/8$ \\
  0 & 0 & 0 & 1 & 1 & $1\times 1 \times 1=1$ \\
  1 & 1 & 1 & 0 & 1 & $1\times 1/2 \times 1/2=1/4$ \\
  \hline
\end{tabular}\\
\end{tabular}
\end{center}
\end{example}

\subsection{Vague commutativity}\label{def:diagComu}
The commutativity of a diagram $D:\G\rightarrow Set$, in the category of sets, with vertices $(A_i)_I$ can be detected in its tabular internalization $Lim\;D\subset \prod_IA_i$. In this sense we see a limit as a way to encode the diagram structure.   The commutativity of the diagram $D$ given by
\[
\small
\xymatrix @=10pt {
&B\ar[dr]_g&\\
A\ar[ur]_f\ar[rr]_h&&C\\
}
\]
can be expressed  by the equality  $f\circ g=h,$ and when we interprete $f$, $g$ and $h$ as relations in $Rel_{\Omega}$ it is true if and only if, for every $a\in A$, we have
\begin{equation}
\exists_{b\in B, c\in C}f(a,b)\wedge g(b,c)\wedge h(a,c).
\end{equation}
This is equivalente to write, for a select flavor
\begin{equation}
\sum_{b\in B, c\in C}!f(a,b)\times !g(b,c)\times !h(a,c)=\sum_{b\in B, c\in C}(\overline{lim}\;D)(a,b,c)=\top,
\end{equation}
or, using a similarity $\alpha:\Omega$,
\begin{equation}
[\rho_h=\rho_{g\circ f}]_\alpha=\top.
\end{equation}
In this sense, the object $A$ is called \emph{the diagram source} in $D$, the diagram is commutative when
\begin{equation}
\Pi_{a\in A}\sum_{b\in B, c\in C}(\overline{lim}\;D)(a,b,c)=\top.
\end{equation}
Since element selections in $A$ are described by maps $!a:\ast\rightarrow A$, the diagram is commutative if
$$(\pi_{B\times C} \circ \overline{lim}\;D)^\circ\circ !a =\top,\text{ for every map }!a:\ast\rightarrow A$$
\[
\small
\xymatrix @=10pt {
\ast\ar[rr]^{\overline{lim}\;D}\ar[rrdd]_{!a}&&A\times B\times C \ar[dd]^{\pi_{B\times C}}\\
&&\\
&&A\\
}
\]

Generically, a diagram $D$ in $Rel^\ast_\Omega$ is commutative if for every object $A$,  and every element $a\in A$ two sequence of composable maps $f_1,\ldots,f_n$ and $g_1,\ldots,g_m$ from $A$ to $E$ satisfy
\begin{equation}
\exists_{b_1,\ldots,b_{n-1}, c_1,\ldots,c_{m-1},e} f_1(a,b_1)\wedge\ldots\wedge f_n(b_{n-1},e)\wedge g_1(a,c_1)\wedge\ldots\wedge g_m(c_{m-1},e),
\end{equation}
this is equivalente to write, for a select flavor
\begin{equation}
\sum_{b_1,\ldots,b_{n-1}, c_1,\ldots,c_{m-1},e} f_1(a,b_1)\times\ldots\times f_n(b_{n-1},e)\times g_1(a,c_1)\times\ldots\times g_m(c_{m-1},e)=\top,
\end{equation}
or
\begin{equation}
\sum_{b_1,\ldots,b_{n-1}, c_1,\ldots,c_{m-1},e} (\overline{lim}\;D)(a,b_1,\ldots,b_{n-1}, c_1,\ldots,c_{m-1},e)=\top.
\end{equation}

For the conservative extension to the notion of diagram commutativity, of a multi-diagram $D$ in $\D$ a $\Omega$-multi-category with local products, we assume the selection of a set $\Box D $ of its source vertices. Vertices in $\Box D$ are called the \emph{sources} of multi-diagram $D$. The commutativity of $D$ is defined as a relation on those vertices.

Let $I$ and $J$ be two sets of indexes, where $J\subset I$, and a set of objects $\{A_i\}_I$, vertices of a diagram $D$. A projection $\pi_J:\prod_I A_i\rightarrow \prod_{I\setminus J} A_j$ allows the definition of, for every $\Omega$-object $(\prod_I A_i,\bar{x},\alpha)$, an $\Omega$-object $(\prod_{I\setminus J} A_j,\sum_J\bar{x},\sum_J\alpha)$ where
 $\sum_J\bar{x}=\pi_J\circ \bar{x}$, and
 $\sum_J\alpha=\pi_J\circ \alpha\circ \pi_J^\circ$.
Since $\pi_J\circ\pi_J^\circ = 1_{\prod_{I\setminus J} A_j}$, $(\prod_I A_i,\bar{x},\alpha)$ is a refinements for $(\prod_{I\setminus J} A_j,\sum_J\bar{x},\sum_J\alpha)$.
\[
\small
\xymatrix @=7pt {
\ast \ar[rr]^{\bar{\bar{x}}} \ar[rrdd]_{\sum_J \bar{x}}&&\prod_IA_i\ar[rr]^{\alpha}\ar[dd]_{\sum_J}&&\prod_IA_i\ar[dd]_{\sum_J}\\
&&&&\\
&&\prod_{I\setminus J}A_k\ar[rr]^{\sum_J \alpha}&&\prod_{I\setminus J}A_k\\
 }
\]

\begin{definition}[Commutativity]\label{Comutatividade diagramas}
Consider a $\Omega$-multi-category $\D_\Omega$, with local products, and a multi-diagram $D:\G \rightarrow \D_\Omega$ with vertices $(A_i,\bar{x}_i,\alpha_i)_I$. The multi-diagram $D$ is commutative for $\Box D=(A_j,\bar{x}_j,\alpha_j)_J$ with $J\subset I$ if
\[
\pi_J\circ\overline{lim}\;D=\top.
\]
The multi-diagram $D$ is $\lambda$-commutative for $\Box D=(A_j,\bar{x}_j,\alpha_j)_J$, when
\[
\pi_J\circ\overline{lim}\;D\geq \lambda.
\]
A  multi-diagram $D$ is universal if is commutative with empty source, i.e. $\Box D=\emptyset$.
\end{definition}

From the definition of commutativity and by definition of refinement of $\Omega$-objects follows:

\begin{proposition}
Let $\D$ be a $\Omega$-multi-category, with local products,  and $D:\G \rightarrow \D_\Omega$ a multi-diagram with vertices $(A_i,\bar{x}_i,\alpha_i)_I$. The multi-diagram $D$ is commutative for  $\Box D=(A_j,\bar{x}_j,\alpha_j)_J$, with $J\subset I$, if and only if
\[
(\Pi_{I\setminus J} A_k, \top,\sum_J \alpha_j)\leq (\Pi_I A_i, lim\;D,\Pi_I \alpha_i) .
\]
The multi-diagram $D$ is universal when
\[
(\ast, \top,\top)\leq (\Pi_I A_i, lim\;D,\Pi_I \alpha_i).
\]
\end{proposition}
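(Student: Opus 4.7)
The plan is to unpack the definition of refinement between $\Omega$-objects and exhibit the canonical projection $\pi_J:\Pi_I A_i\to\Pi_{I\setminus J}A_k$ (supplied by the local product structure) as the natural witnessing morphism. Once this is done, the three clauses of the refinement definition collapse onto (i) the commutativity hypothesis itself, (ii) the defining identity of $\sum_J\alpha_j$, and (iii) a splitting property of projections in $\D_\Omega$.

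For the forward direction of the first statement, assume $D$ is commutative for $\Box D=(A_j,\bar{x}_j,\alpha_j)_J$, that is, $\pi_J\circ\overline{lim}\;D=\top$. I would take $f=\pi_J$ as the candidate refinement morphism from $(\Pi_I A_i,\overline{lim}\;D,\Pi_I\alpha_i)$ onto $(\Pi_{I\setminus J}A_k,\top,\sum_J\alpha_j)$ and verify the three clauses of the refinement relation in turn: the distribution compatibility $f\circ\overline{lim}\;D=\top$ is exactly the commutativity hypothesis; the similarity compatibility $\sum_J\alpha_j=f\circ(\Pi_I\alpha_i)\circ f^\circ$ is precisely the defining identity $\sum_J\alpha_j=\pi_J\circ\alpha\circ\pi_J^\circ$ recalled just before Definition \ref{Comutatividade diagramas}; the remaining idempotency-type identity for $f$ comes from the splitting property of the canonical projections in the local product structure inherited by $\D_\Omega$. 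This yields $(\Pi_{I\setminus J}A_k,\top,\sum_J\alpha_j)\leq(\Pi_I A_i,\overline{lim}\;D,\Pi_I\alpha_i)$.

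Conversely, suppose the refinement holds with some witness $f:(\Pi_I A_i,\overline{lim}\;D,\Pi_I\alpha_i)\to(\Pi_{I\setminus J}A_k,\top,\sum_J\alpha_j)$. Then $f\circ\overline{lim}\;D=\top$ and $f\circ(\Pi_I\alpha_i)\circ f^\circ=\sum_J\alpha_j=\pi_J\circ(\Pi_I\alpha_i)\circ\pi_J^\circ$. Using the universal property of $\Pi_{I\setminus J}A_k$ together with the compatibility with similarities, I would argue that $f$ factors as $\pi_J$ up to an isomorphism of refinements, whence $\pi_J\circ\overline{lim}\;D=\top$ and commutativity follows. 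The second statement is the degenerate case $J=I$ of the first: $I\setminus J=\emptyset$, the empty product is the unit $\ast$, and $\sum_J\alpha_j$ collapses to the endomorphism $\top\in\D[\ast,\ast]$, so the first equivalence specializes to $(\ast,\top,\top)\leq(\Pi_I A_i,\overline{lim}\;D,\Pi_I\alpha_i)$.

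The main obstacle is the converse direction of the first equivalence: the forward direction simply exhibits a canonical witness, but the converse must extract the specific commutativity identity $\pi_J\circ\overline{lim}\;D=\top$ from the weaker datum that \emph{some} morphism $f$ witnesses the refinement. This requires the uniqueness part of the universal property of the product in $\D_\Omega$ together with the hypothesis that the local product structure of $\D$ lifts to $\D_\Omega$, so that $f$ is essentially determined by its compatibility with $\Pi_I\alpha_i$ and with the distributions.
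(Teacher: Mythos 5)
Your forward direction coincides with what the paper intends: the paper in fact supplies no proof at all (the proposition is introduced only with ``From the definition of commutativity and by definition of refinement of $\Omega$-objects follows''), and the implicit argument is precisely your canonical witness $\pi_J$ --- your clause (ii) is the definition $\sum_J\alpha_j=\pi_J\circ(\Pi_I\alpha_i)\circ\pi_J^\circ$, your clause (iii) is the identity $\pi_J\circ\pi_J^\circ=1_{\prod_{I\setminus J}A_k}$ recorded just before Definition~\ref{Comutatividade diagramas}, and your clause (i) is the commutativity hypothesis verbatim. (Note in passing that the paper's refinement definition literally demands $f^\circ\circ f=1_A$, which $\pi_J$ does not satisfy; the condition the paper actually uses, and the one you silently and correctly invoke, is the splitting $f\circ f^\circ=1$ on the target. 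Your reduction of the universal clause to the case $J=I$, with empty product $\ast$ and $\sum_I\alpha_i$ collapsing to $\top\in\D[\ast,\ast]$, is likewise consistent with the proposition as stated.)

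The converse, however, contains a genuine gap, exactly at the step you flagged: the universal property of the product cannot show that an arbitrary refinement witness $f$ ``factors as $\pi_J$ up to an isomorphism of refinements.'' That universal property determines a morphism into a product by its components, but nothing forces the components of $f$ to be the canonical ones --- and in $\D_\Omega$ the witness is merely a bimodule, not a map, so there is even less rigidity. Concretely the claim fails: take $I=\{1,2\}$, $A_1=A_2=A=\{0,1\}$, $\alpha_1=\alpha_2=1_A$, $\bar{x}_i=\top$, and a one-arrow multi-diagram in $Rel_\Omega$ (flavor with $+=\vee$) whose vague limit $\ell=\overline{lim}\;D$ satisfies $\ell(a_1,a_2)=\top$ if $a_1=0$ and $\ell(a_1,a_2)=\bot$ otherwise. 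Then $\pi_{\{1\}}$ and $\pi_{\{2\}}$ have the same target $\Omega$-object $(A,\top,1_A)$, the refinement $(A,\top,1_A)\leq(A\times A,\ell,1_A\times 1_A)$ holds with witness $\pi_{\{1\}}$ (one checks $\pi_{\{1\}}\circ\pi_{\{1\}}^\circ=1_A$, $\pi_{\{1\}}\circ\ell=\top$, $\pi_{\{1\}}\circ(1_A\times 1_A)\circ\pi_{\{1\}}^\circ=1_A$), yet $\pi_{\{2\}}\circ\ell\neq\top$, so with an existentially quantified witness the biconditional is false for $J=\{2\}$. The statement is true --- and then genuinely ``by definition,'' which is all the paper claims --- only if the refinement in the proposition is read as the canonical one constructed from the local product structure, i.e., witnessed by $\pi_J$ itself; under that reading the converse is immediate, since the distribution clause $\pi_J\circ\overline{lim}\;D=\top$ of that specific witness is word-for-word the commutativity condition, and no factorization argument is needed or available.
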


\begin{example}
From Example \ref{example1}, since $(\pi_{B\times C\times D\times E}\circ\overline{lim}\;D)(0)=1/2\vee 1=1$ and  $(\pi_{B\times C\times D\times E}\circ\overline{lim}\;D)(1)=1/8\vee 1/4=1/4$, the multi-diagram is non-commutativity in $\Box D=\{A\}$ for the product logic and when the flavor is $([0,1],\times,1,\vee)$. But it is $1/4$-commutativity in $\{A\}$ since $(\pi_{B\times C\times D\times E}\circ\overline{lim}\;D)(0)\geq 1/4$ and $(\pi_{B\times C\times D\times E}\circ\overline{lim}\;D)(1)\geq 1/4$.
\end{example}

\begin{example} Let $Rel_{[0,1]}$ be governed by the product logic, with flavor $([0,1],\times,1,\vee)$, $\mathds{R}$ be the set of real numbers. In the multi-diagram $D$, presented on Figure \ref{equation1}, each vertices is interpreted as a $[0,1]$-relations $(\mathds{R},\top,1)$,
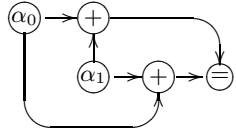
\begin{figure}[h]
\[
\small
\xymatrix @=10pt {
&&&&\\
 *+[o][F-]{\alpha_0}\ar[r]\ar `d[ddrr]`r[rru][drr] &*+[o][F-]{+}\ar `r[rrd][rrd] &&&\\
&*+[o][F-]{\alpha_1}\ar[u]\ar[r] &*+[o][F-]{+} \ar [r]&*+[o][F-]{=}\\
&&&&\\
}
\]
\caption{A multi-diagram encoding $x+y=y+x$.}\label{equation1}
\end{figure}
$=$ representes the relation $=(x,y):=\left\{
                \begin{array}{cc}
                    1 &\text{se } x=y \\
                    0 & \text{se } x\neq y \\
                  \end{array}
                \right.$ and  $+$ is the relation $+:\mathds{R}\times\mathds{R}\rightarrow \mathds{R}$ described using the gaussian   $+(x,y,z)=e^{-\frac{(z-x-y)^2}{2}}.$
Using the notion of vague limit presented on Definition \ref{limit}, for every $x,y,w\in\mathds{R}$ we have,
\[
\begin{array}{rcl}
  (\overline{lim}\;D)(x,y,w) & = & +(x,y,w)\times+(y,x,w')\times = (w,w')\\
                  & = & e^{-\frac{(w-x-y)^2}{2}},\\
\end{array}
\]
since $e^{-(w-x-y)^2}\leq 1$, and because $e^{-(w-x-y)^2}= 1$ when $w=x+y$, follows
\[
 (\pi_{\mathds{R}}\circ\overline{lim}\;D)(x,y) = \bigvee_{w}e^{-(w-x-y)^2}  =  1.
\]
Then the multi-diagram is commutative for  $x$ and $y$ domains. Furthermore, since
\[
 (\pi_{\mathds{R}^3}\circ\overline{lim}\;D)(x,y,z) =  \bigvee_{x,y,w}e^{-(w-x-y)^2} =  1,
\]
the multi-diagram $D$ is universal.
\end{example}

\subsection{Vague colimit}

Besides limits, \emph{colimits} are another important notion for algebraic specification \cite{Diskin99}\cite{piessen00}. It is defined generically as an coequalizer between  coproducts\cite{Borceux94}. In the category of sets this is described  as an equivalente relation defined between set disjoint union. Our extension in a $\Omega$-multi-category presents the colimit as a similarity relation, for that note:

\begin{proposition}
Every map $!f:(\alpha:A)\rightarrow (\beta:B)$ in a $\Omega$-multi-category $\D$, such that $(!f) \circ(!f)^\circ=1_A$, defines an equivalence relation in $A$, $R=(!f)^\circ\circ(!f)$.
\end{proposition}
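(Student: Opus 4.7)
The goal is to verify that $R = (!f)^\circ \circ (!f)$ satisfies the three axioms of an equivalence relation recalled earlier in the paper: reflexivity $1_A \leq R$, symmetry $R^\circ = R$, and idempotence $R \circ R = R$. My plan is to dispatch each clause by a short computation using the reverse-operator axioms built into the definition of $\Omega$-multi-category, together with the fact that $!f$ is a map (so both entire and simple) and the hypothesis that $(!f) \circ (!f)^\circ$ collapses to an identity.

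Symmetry is the easiest: applying $(g \circ h)^\circ = h^\circ \circ g^\circ$ and involutivity $(h^\circ)^\circ = h$ gives the one-line calculation $R^\circ = (f^\circ \circ f)^\circ = f^\circ \circ (f^\circ)^\circ = f^\circ \circ f = R$. Reflexivity is immediate from the definition of map, whose entire clause reads exactly $1_A \leq f^\circ \circ f = R$. For idempotence I would use associativity of $\circ$ together with the hypothesis to collapse the middle pair of factors in $R \circ R = f^\circ \circ (f \circ f^\circ) \circ f$, obtaining $f^\circ \circ 1 \circ f = f^\circ \circ f = R$.

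There is essentially no technical obstacle here: the entire argument is a bookkeeping exercise with axioms that are already in place. The one conceptual remark worth making is that without the hypothesis on $(!f) \circ (!f)^\circ$ one would obtain only the inequality $R \circ R \leq R$ (using simplicity of $!f$ alone, which gives $f \circ f^\circ \leq 1$), so $R$ would be merely a similarity relation. The stated hypothesis is precisely what upgrades that inequality to equality, and it does so at exactly the middle of $R \circ R$, which is the point of the statement.
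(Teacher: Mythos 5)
Your proof is correct and takes essentially the same route as the paper's: entirety of $!f$ gives $1_A\leq R$, the reverse-operator axioms give $R^\circ=R$, and the hypothesis on $(!f)\circ(!f)^\circ$ collapses the middle pair in $R\circ R$ to yield $R\circ R=R$. Your closing observation, that simplicity alone would give only $R\circ R\leq R$ (i.e.\ a similarity rather than an equivalence), is accurate and correctly identifies the role of the extra hypothesis, though the paper itself does not spell this out.
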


\begin{proof}
$R\geq 1_B$ because $!f$ is entire, $R^\circ = R$ since $((!f)^\circ\circ(!f))^\circ=(!f)^\circ\circ(!f)$ and $R\circ R = R$ since $(!f)^\circ\circ(!f)\circ(!f)^\circ\circ(!f)=(!f)^\circ\circ(!f)$.
\end{proof}

In particular:

\begin{corollary}
Given a diagram $D:\G\rightarrow\D$, with vertices $(A_i)_I$, if its colimit $(coLim\;D,(!a_i)_I)$ exists in $\D^\ast$ it defines an equivalent in  $\coprod_I A_i$ by the similarity relation$$\overline{colim}\;D:\coprod_I A_i\rightarrow \coprod_I A_i,$$ given by $\overline{colim}\;D=(!j)^\circ\circ(!j)$, where $!j$ is by coproduct definition the only map such that $$!j\circ!p_i=!a_i,$$ for each $i\in I$, where $!p_i$ is  the coprojection in $\D^\ast$.
\end{corollary}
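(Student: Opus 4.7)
The plan is to reduce the claim directly to the preceding proposition applied to the map $!j:\coprod_I A_i\rightarrow coLim\;D$. Thus the proof has two pieces: first, produce the map $!j$ with the stated factorization property, and second, verify the hypothesis of the preceding proposition for $!j$, namely $!j\circ(!j)^\circ=1_{coLim\;D}$. Once both are in place, the proposition hands us that $(!j)^\circ\circ(!j)$ is reflexive, symmetric and transitive, i.e. an equivalence relation on $\coprod_I A_i$, which is exactly the content of the corollary.

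The first piece is essentially bookkeeping. Since $(coLim\;D,(!a_i)_I)$ is a colimit cocone in $\D^\ast$, the family $(!a_i:A_i\rightarrow coLim\;D)_I$ is a cocone on the discrete part of $D$, and by the universal property of the coproduct $\coprod_I A_i$ in $\D^\ast$ there is a unique map $!j:\coprod_I A_i\rightarrow coLim\;D$ with $!j\circ!p_i=!a_i$ for every $i\in I$. Being obtained from a universal factorization in the crisp subcategory, $!j$ is indeed a map in the sense of the paper, so taking $(!j)^\circ$ is well-defined.

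The second, harder piece is the surjectivity condition $!j\circ(!j)^\circ=1_{coLim\;D}$. The idea is to use the standard presentation of $coLim\;D$ as a coequalizer of two parallel maps $\coprod_{f\in\G}A_{\mathrm{dom}(f)}\rightrightarrows \coprod_I A_i$ followed by the coprojection of $\coprod_I A_i$ into this coequalizer; up to canonical isomorphism this coprojection is exactly $!j$. Coequalizers in a crisp subcategory such as $\D^\ast\cong Set$ (the instance that the paper explicitly isolates) are quotient maps, and a quotient map $q$ satisfies $q\circ q^\circ=1$ in the relational calculus: given $y$ in the quotient, any preimage $x$ yields $q(x)=y$, so $q\circ q^\circ(y,y)=\top$, while distinctness of images gives $q\circ q^\circ(y,y')=\bot$ for $y\neq y'$. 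This is the only step where we really use that $\D^\ast$ behaves like $Set$, and it is the main obstacle: in a fully abstract $\Omega$-multi-category one would have to postulate that colimits in $\D^\ast$ arise as such quotients of coproducts, whereas in the motivating example $Rel_\Omega$ (with $\D^\ast=Set$) the property is automatic. Combining the two pieces, the preceding proposition yields that $\overline{colim}\;D=(!j)^\circ\circ(!j)$ is an equivalence relation on $\coprod_I A_i$, as required.
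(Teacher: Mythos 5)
Your proposal is correct and takes the same basic route as the paper, which states the corollary with no separate argument (``In particular:''), i.e.\ as the preceding proposition instantiated at $!f=!j$. What you add, and the paper silently skips, is the verification of that proposition's hypothesis $!j\circ(!j)^\circ=1_{coLim\;D}$: your argument via the presentation of $coLim\;D$ as a coequalizer of two maps between coproducts, so that $!j$ is (up to isomorphism) the quotient map and hence satisfies $q\circ q^\circ=1$, is sound in the crisp subcategory $Rel^\ast_\Omega\cong Set$ that the paper isolates, and you are right that in an abstract $\Omega$-multi-category this would have to be postulated rather than proved. It is worth noting, however, that your ``harder piece'' is actually dispensable, and with it the dependence on $\D^\ast$ behaving like $Set$. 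Since $!j$ is a map it is simple, $!j\circ(!j)^\circ\leq 1_{coLim\;D}$, so monotonicity of composition gives
\[
R\circ R=(!j)^\circ\circ\bigl(!j\circ(!j)^\circ\bigr)\circ(!j)\;\leq\;(!j)^\circ\circ(!j)=R,
\]
while entirety gives $R\geq 1$ and hence $R\circ R\geq R$; together these force $R\circ R=R$, and $R^\circ=R$ is automatic. Thus $R=(!j)^\circ\circ(!j)$ is an equivalence for \emph{any} map $!j$, surjective or not, which shows that the hypothesis $(!f)\circ(!f)^\circ=1$ in the paper's proposition is stronger than necessary and that the corollary holds in full generality without your coequalizer step. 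Your proof as written is nevertheless complete for the setting the paper actually works in, and its honest flagging of where $Set$-like behaviour enters is a genuine improvement over the paper's unstated assumption.
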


A $\Omega$-multi-category $\D$ has \emph{local coprodutos} if its crisp subcategory $\D^\ast$ has coproducts.

Using $Set$ as a reference, for every diagram $D:\G\rightarrow Set$, with vertices $(A_i)_I$ and morphisms $(f_{i,j,k}:A_i\rightarrow A_j)_{I\times J\times K_{I,J}}$. Without loss of generality, we assume in $\G$ a monoidal structure preserved by $D$. We can assign to $D$ a diagram $D':\G\rightarrow Set$, the \emph{diagram aggregation}, having by vertices $A_i/_{\cong_i}$, sets of equivalence classes for each equivalence relation  $\cong_i$, such that for every $a,a'\in A_i$:
 \begin{enumerate}
   \item $a \cong_i a'$ if exists $j\in J$, $b\in A_j$ and $k,k'\in K_{I,J}$ such that $f_{j,i,k}(b)=a$ and $f_{j,i,k'}(b')=a$, or
   \item $a \cong_i a'$ if exists $j\in J$, $b\in A_j$ and $k,k'\in K_{I,J}$ such that $f_{i,j,k}(a)=f_{i,f,k'}(a')$.
 \end{enumerate}
The family of morfismos $(f_{i,j,k}:A_i\rightarrow A_j)_{k\in K}$ in $D$ defines a morphism in $D'$, $f_{i,j}:A_i/_{\cong_i}\rightarrow A_j/_{\cong_j}$, give for each $k$, by $f_{i,j}([a]_{\cong_i})=[f_{i,j,k}(a)]_{\cong_j}$. Note that, for every $k,k'$, $[f_{i,j,k}(a)]_{\cong_j}=[f_{i,j,k'}(a)]_{\cong_j}$. Furthermore, if $f_{j,l,k}\circ f_{i,j,k'}= f_{i,l,k''}$ then $f_{j,l}\circ f_{i,j}= f_{i,l}$.

Let $(coLim\;D,(!a_i)_I)$ be a colimit  for $D:\G\rightarrow Set$ and $(coLim\;D',(!b_i)_I)$ the colimit of $D':\G\rightarrow Set$. For every set $A_i$, the equivalente relation $\cong_i$ defines a map $[\_]_{\cong_i}:A_i\rightarrow A_i/_{\cong_i}$ used on definition of a cocone $(coLim\;D',(!b_i\circ [\_]_{\cong_i})_I)$ on the diagram $D$, and by colimit definition there is a unique  map $[\_]:coLim\;D\rightarrow coLim\;D'$, such that $[\_]\circ a_i = b_i\circ [\_]_{\cong_i}$. Assuming the possibility of selecting a representative element in each equivalence class, we define the map $(\_)_{\cong_i}: A_i/_{\cong_i}\rightarrow A_i$, such that $([a])_{\cong_i}=a$. This morphism allows the definition of a cone $(!a_i\circ (\_)_{\cong_i})_I$ on the diagram $D'$, from which there is a unique factorization $(\_):coLim\;D\rightarrow coLim\;D'$, such that $a_i\circ (\_) = [\_]_{\cong_i} \circ b_i$. This implies the existente of a isomorphism between $coLim\;D$ and $coLim\;D'$.

In this sense the colimit $(coLim\;D,(!a_i)_I)$  on the diagram $D:\G\rightarrow Set$, can be defined using the isomorphism between $coLim\;D$ and $(\coprod_I(A_i/_{\cong_i}))/r$, where $r:\coprod_I(A_i/_{\cong_i})\rightarrow \coprod_I(A_i/_{\cong_i})$, is described by disjoint union of morphisms in $D':\G\rightarrow Set$, i.e. $$r = \coprod_{  I\times I}(f_{i,j}\vee f^\circ_{j,i}).$$ We see  equivalent relations $\cong_i$ as a mechanism for element aggregation in each vertices,  and the relation $r$ defined in  $\coprod_I(A_i/_{\cong_i})$ as a way for aggregation of elements in distinct vertices. We simplified the use of $r$ defining it by block decomposition, witting for that $r_{ij}=f_{ij}\vee f^\circ_{j,i}$.

The embedding of $Set$ in a $\Omega$-multi-category $Rel_\Omega$, $Set\cong Rel_\Omega^\ast$, for every diagram $D:\G\rightarrow Rel_\Omega^\ast$, with vertices in $(A_i)_I$ and morphisms $(f_{i,j,k}:A_i\rightarrow A_j)_{I\times J\times K_{I,J}}$. The colimit $coLim\;D$ can be constructed using diagram $D':\G'\rightarrow Rel_\Omega^\ast$ and described in the set $(\coprod_I(A_i/_{\cong_i}))/r$, where the relation $r:\coprod_I(A_i/_{\cong_i})\rightarrow \coprod_I(A_i/_{\cong_i})$ is described by
$$r(a,b) = \sum_{  I\times I}(f_{i,j}(a,b)+f^\circ_{j,i}(a,b)),$$
assuming $f_{i,j}(a,b)=\bot$ that every $a\notin A_i/_{\cong_i}$ or $b\notin A_j/_{\cong_j}$. Note that, in this conditions, $r$ is an equivalente relation: since $r_{ii}=1$, then $r\geq 1$, $r_{ij}^\circ=(f_{ij}+f_{ji}^\circ)^\circ=f_{ij}^\circ+f_{ji}=r_{ji}$ e $r_{ji}\circ r_{kj}=(f_{ij}+f_{ji}^\circ)\circ(f_{kj}+f_{jk}^\circ)=f_{ij}\circ f_{kj}+ f_{ji}^\circ\circ f_{kj}+f_{ji}\circ f_{jk}^\circ+f_{ji}^\circ\circ f_{jk}^\circ=f_{ki}+f_{ki}+f_{ik}^\circ+f_{ik}^\circ=f_{ki}+f_{ik}^\circ=r_{ki}$.

In order to present the notion of vague colimit on the multi-diagram $D:\G\rightarrow\D_\Omega$, we must restrict the $\Omega$-multi-categoria $\D$ structure. We must assume that $\D$ is an additive $\Omega$-multi-categoria with local coproduts. In $\D$ we also impose that for every multi-morphism $f:A\rightarrow B$, $f+f=f$. This implies $\lambda+\lambda=\lambda$ hold for every element $\lambda$ in the semi-ring $(\Omega,\times, 1, +)$. Example for this are the $\Omega$-multi-categories defined using G\"{o}del's logic.

With no loss of generality, vague colimit are computed for multi-diagrams with a wick monoidal constrain. We assume that multi-diagrams  $D:\G\rightarrow\D_\Omega$, with vertices $(A_i,\bar{x}_i,\alpha_i)_I$ and multi-morphisms $(f_{J,L,k}:(A_j)_J\rightarrow (A_l)_L)_{I\times I\times K_{I,L}}$, where $J,L\subset I$, satisfying
\begin{enumerate}
  \item If $f_{J,L,k}$ and $f_{J',L',k'}$ are multi-morphisms on $D$, with $J'\subset L$, there is, on $D$, a  multi-morphism $f_{J\cup I\setminus J',L'\cup J'\setminus I,k''}$ such that $$f_{J',L',k'}\circ f_{J,L,k} \leq f_{J\cup I\setminus J',L'\cup J'\setminus I,k''},$$
  \item For every $J\subset I$ there is a multi-morphism $f_{J,J,k}=\sum_J\alpha_j$, defined by similarity relations on vertices of $D$ indexed by $J$.
\end{enumerate}

For a multi-diagram in the above condition, $D:\G\rightarrow\D_\Omega$, its aggregation is a multi-diagram $D_{[]}:\G\rightarrow\D_\Omega$, the  multi-diagram having by vertices families of source and target vertices for multi-morphisms in $D$. And such that, between two families $(A_j)_J$ and $(A_l)_L$ there is a unique multi-morphism defined by $$f_{J,L}=\sum_{k\in K_{J,L}} f_{J,L,k}.$$ We simplify notation by denoting by $B_{J}$ the vertices of $D_{[]}$ defined by the family $(A_j)_J$, if there is a multi-morphism $f_{J,L}$ or $f_{L,J}$, and write $\B$ for the set of all this vertices.

The multi-morphism $c:\coprod \B\rightarrow \coprod \B$ defined as
\[
c=\coprod_{B_J,B_L} (f_{J,L}+f_{L,J}^\circ),
\]
is a similarity relation in $\coprod \B$, since
$c_{II}=f_{II}\geq 1_{B_I}$, $c_{IJ}^\circ=(f_{J,L}+f_{L,J}^\circ)^\circ=f_{J,L}^\circ+f_{L,J}=c_{JI}$. Because in $\D$ we have $f_{J,L,k}\circ f_{J',L',k'}\leq f_{J\cup I\setminus J',L'\cup J'\setminus I,k''}$, it follows $f_{J,L}\circ f_{J',L'}\leq f_{J\cup I\setminus J',L'\cup J'\setminus I}$, then $c_{L,J}\circ c_{J,L'}=(f_{L,J}+f_{J,L}^\circ)\circ(f_{J,L'}+f_{L',J}^\circ)\leq f_{L,L'}+f_{L',L}^\circ=c_{L,L'}$.

\begin{definition}[Vague colimits]
Let $\D$ be an additive $\Omega$-multi-category,  such that in the semi-ring $(\Omega,\times, 1, +)$, holds $\lambda+\lambda=\lambda$.  Consider a multi-diagram $D:\G\rightarrow\D_\Omega$ having by vertices $(A_i,\bar{x}_i,\alpha_i)_I$ and a  multi-morphism $(f_{J,L,k})_{I\times I\times K_{I,L}}$, and for every $J,L'\subset I$, $L\subset I$, $J'\subset L$ and $k,k'\in K$, there is $k''\in K$ such that $$ f_{J',L',k'}\circ f_{J,L,k}\leq f_{J\cup I\setminus J',L'\cup J'\setminus I,k''}.$$ The vague colimit  for diagram $D:\G\rightarrow\D_\Omega$, is computed using the aggregation diagram $D_{[]}:\G\rightarrow\D_\Omega$, and it is a $\Omega$-object defined as $(\coprod\B,\alpha_{\B},\overline{colim})$, where the similarity relation $\overline{colim}\;D:\coprod \B\rightarrow \coprod \B$ is described using blocks $\overline{colim}_{J,L}(D)$, by
\[
\overline{colim}\;D=\coprod_{B_J,B_L\subset \B}\overline{colim}_{J,L}(D).
\]
For each used block we have $\overline{colim}_{J,L}(D)=f_{J,L}+f_{L,J}^\circ$.
\end{definition}

Bellow we present some common examples of vague colimits.

\begin{example}[Vague product]
A discrete diagram $D$ in $Rel_\Omega$ defined using two $\Omega$-objects $(\bar{x}:\alpha)$ and $(\bar{y}:\beta)$, having by support sets $A$ and $B$, respectively, has by vague colimit $(A\coprod B, \bar{x}\coprod \bar{y}, \alpha+\beta)$ where
\begin{equation}
(\alpha+\beta)(a,b)=\alpha(a,b)+\alpha(b,a)+\beta(a,b)+\beta(b,a)=\alpha(a,b)+\beta(a,b),
\end{equation}
where we assume $\alpha(a,b)=\bot$ and $\beta(a,b)=\bot$ when this relations are not defined for $(a,b)$.
\end{example}

\begin{example}[Vague coequalizers]
A diagram $D$ in $Rel_\Omega$ defined using two parallel morphisms $f,g:(\bar{x}:\alpha)\rightarrow (\bar{y}:\beta)$, with $\alpha:A$ and $\beta:B$ has by vague colimit $(A\coprod B, \bar{x}\coprod \bar{y},\overline{colim}\;D)$ described by a relation having by support $A\coprod B$,  given when $a,b\in A\coprod B$
\begin{equation}
(\overline{colim}\;D)(a,b)= f(a,b) + f(b,a) +g(a,b)+g(b,a) + \alpha(a,b) + \beta(a,b),
\end{equation}
where we assume $f(a,b)=\bot$, $g(a,b)=\bot$, $\alpha(a,b)=\bot$ and $\beta(a,b)=\bot$ when this relations are not defined.
\end{example}

\begin{example}[Vague pushout]
A diagram  $D$ in $Rel_\Omega$ defined by $f:(\bar{z}:\gamma)\rightarrow  (\bar{x}:\alpha)$, and $g: (\bar{z}:\gamma)\rightarrow(\bar{y}:\beta)$, with $\alpha:A$, $\beta:B$ and $\gamma:C$
has by vague colimit $(A\coprod B\coprod C, \bar{x}\coprod \bar{y}\coprod \bar{z},\overline{lim}\;D)$ given by
\begin{equation}
(\overline{colim}\;D)(a,b)= f(a,b) +f(b,a)+ g(a,b)+ g(b,a) + \alpha(a,b) + \beta(a,b)+ \gamma(a,b),
\end{equation}
when $a,b\in A\coprod B \coprod C$ and where we assume $f(a,b)=\bot$, $g(a,b)=\bot$, $\alpha(a,b)=\bot$, $\beta(a,b)=\bot$ and $\gamma(a,b)=\bot$ when this relations are not defined.
\end{example}

\section{Pattern in a data set}
A \emph{data set} is a table or a weighted table, in Table \ref{dataset} we can find an example. In this sense the limit of a multi-diagram can be seen as a data set, and we named the diagram a \emph{description} for the data set. When a domain is governed by a multi-valued logic, we are interested in problems having its information encoded on data sets, described using sets of diagrams in a $\Omega$-multi-category $Rel_\Omega$. A diagram $D:\G\rightarrow Rel_\Omega$, in this case, is usually called an \emph{instantiation} for the structure specified by $\G$ \cite{Barr95}.

The vertices $(A_i,\bar{x}_i,\alpha_i)_I$ of an instantiation $D:\G\rightarrow Rel_\Omega$ are usually named \emph{attributes}. For every multi-diagram homomorphism $Q:\G_0\rightarrow \G$, the limit  $\overline{lim}\;(D\circ Q)\in \prod_JA_j$ describes a data set with structure $D\circ Q$. Following the usual approach on sketches theory \cite{Barr95}, a homomorphism $Q:\G_0\rightarrow \G$ is called a \emph{query} to data structure $D:\G\rightarrow Rel_\Omega$. The distribution defined by the limit $\overline{lim}\;(D\circ Q)\in \prod_JA_j$ is interpreted as the \emph{answer} to the query  $Q:\G_0\rightarrow \G$ in data structure $D:\G\rightarrow Rel_\Omega$.

In this context it seems natural to assume that a diagram $D_e:\G_e\rightarrow Rel_\Omega$, having by vertices $(A_i,\bar{x}_i,\alpha_i)_I$, describes a structure in a $\Omega$-object $(S,\bar{s},\beta)$, having by distribution $\bar{s}\in S$, where $S\subset\prod_JA_j$, with $I\subset J$, if there is a map
$!i:S\rightarrow \prod_IA_i$, such that $i^\circ\circ i= 1_S$, $\bar{s}=i^\circ\circ \overline{lim}\;D_e$, and $\beta=i^\circ\circ\prod_I\alpha_i\circ i$, i.e. if
\[
(\prod_IA_i,\overline{lim}\;D_e, \prod_I\alpha_i)\leq (S,d,\beta).
\]
However this notion of description is very restrictive, a more useful notion can be presente by approximation.

\begin{definition}[$\lambda$-description]
The diagram $D_e:\D_e\rightarrow Rel_\Omega$  having by vertices $(A_i,\bar{x}_i,\alpha_i)_I$,  is a $\lambda$-description for the structure in a $\Omega$-object $(S,\bar{s},\beta)$, having by distribution $\bar{s}\in S$, where $S\subset\prod_JA_j$ and $I\subset J$, if there is a map
$!i:S\rightarrow \prod_IA_i$, such that $i^\circ\circ i= 1_S$, and
\[
[\bar{s}=i^\circ.\overline{lim}\;D_e]_\beta\geq\lambda.
\]
\end{definition}

\begin{example} In knowledge representation usually models are expressed using functional components, like logic connectives. A particulary important  methodology is the representation of knowledge using artificial neural network. Where the knowledge is described by a net of processing units (artificial neurons) linked together \cite{Bishop96}. Usually this structures are generated automatic.

Neural networks can be seen as multi-diagrams, where each multi-arrow represents a processing unit (a functional dependence) interpreted as a map in $Rel_{[0,1]}$. A particularly useful type of neural network are the \emph{{\L}ukasiewicz neural network} ({\L}NN) described in \cite{Castro98}. In this type of neural networks processing can be parameterized in order to describe formulas from propositional {\L}ukasiewicz logic \emph{{\L}L}. In Table \ref{semioticaLuk} we can see the correspondence between some formulas and neuronal parametrization. Each processing unit in a {\L}NN with $n$ input wires and one output is interpreted in $Rel_{[0,1]}$ as a map
\[
z= \psi_b(w_1x_1,w_2x_2,\ldots,w_nx_n) =\min(1,\max(0,w_1x_1+w_2x_2+\ldots+w_nx_n+b)),
\]
where parameters $w_1,w_2,\ldots,w_n$ are usually named heights and $b$ the neuron bias. Every formula in {\L}L can be codified using a {\L}NN having by heights in the set $\{1,0,-1\}$ and by bias an integer. In Table \ref{semioticaLuk} we can identify {\L}L connectives using disjunctive and conjunctive formals. Furthermore, it is a simple task identify when a processing unit in a {\L}NN codify a disjunctive or conjunctive formula, for it we used the following result:

\begin{proposition}\cite{Leandro09}\label{conf classification}
Given the neuron configuration
$$
\alpha=\psi_b(-x_1,-x_2,\ldots,-x_n, x_{n+1},\ldots,x_m),
$$
with $m=n+p$ inputs and where $n$ and $p$ are, respectively, the number of negative and the number of positive weights, on the neuron configuration:
\begin{enumerate}
  \item If $b=-p+1$ the neuron is \emph{conjunctive} and it is interpretable as
$$
\neg x_1\otimes\ldots\otimes\neg x_n\otimes x_{n+1}\otimes\ldots\otimes x_m.
$$
  \item When $b=n$ the neuron is \emph{disjunctive} and it is interpretable as
$$
\neg x_1\oplus\ldots\oplus\neg x_n\oplus x_{n+1}\oplus\ldots\oplus x_m.
$$
\end{enumerate}
\end{proposition}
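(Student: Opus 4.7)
The plan is to unfold both sides of the claimed equivalence into explicit piecewise-linear expressions on $[0,1]$ and match them coefficient by coefficient. Recall that in \emph{{\L}ukasiewicz logic} the connectives are interpreted as $x\otimes y = \max(0,x+y-1)$, $x\oplus y = \min(1,x+y)$, and $\neg x = 1-x$; and the neuron function is, by definition, $\psi_b(w_1x_1,\ldots,w_mx_m) = \min(1,\max(0,\sum_i w_ix_i + b))$, which is already a piecewise-linear double clipping in $[0,1]$. So the task reduces to showing that each iterated {\L}ukasiewicz formula, evaluated pointwise, coincides with the correspondingly parametrized neuron.

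The key auxiliary fact, which I would establish first by a routine induction on $k$, is the closed form for iterated conjunction and disjunction:
\[
a_1\otimes\cdots\otimes a_k = \max\bigl(0,\, a_1+\cdots+a_k-(k-1)\bigr),\qquad a_1\oplus\cdots\oplus a_k = \min\bigl(1,\, a_1+\cdots+a_k\bigr).
\]
Substituting $a_i=\neg x_i=1-x_i$ for $i\leq n$ and $a_i=x_i$ for $i>n$, and using $n+p=m$, the conjunctive formula collapses to $\max(0,\, -x_1-\cdots-x_n+x_{n+1}+\cdots+x_m+(-p+1))$, while the disjunctive formula collapses to $\min(1,\, -x_1-\cdots-x_n+x_{n+1}+\cdots+x_m+n)$. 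These are precisely the neuron outputs for biases $b=-p+1$ and $b=n$ respectively, up to the ``missing'' outer clipping, which is what remains to be checked.

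For this last, purely arithmetic step I would verify that the two outer clippings are automatic in the two cases. In the conjunctive case the affine argument is bounded above by $1$, since $\sum_{i>n}x_i\leq p$ and $-\sum_{i\leq n}x_i\leq 0$ give a sum at most $p+(-p+1)=1$, so the outer $\min(1,\cdot)$ is inactive; this yields $\psi_{-p+1}=\neg x_1\otimes\cdots\otimes\neg x_n\otimes x_{n+1}\otimes\cdots\otimes x_m$. Symmetrically, in the disjunctive case the argument is bounded below by $0$, because $n-\sum_{i\leq n}x_i\geq 0$ and $\sum_{i>n}x_i\geq 0$, so the outer $\max(0,\cdot)$ is inactive; this yields $\psi_{n}=\neg x_1\oplus\cdots\oplus\neg x_n\oplus x_{n+1}\oplus\cdots\oplus x_m$.

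The main obstacle is really just careful bookkeeping: tracking signs, the $-(k-1)$ shift produced by iterated conjunction, and identifying which of the two clippings $\max(0,\cdot)$ and $\min(1,\cdot)$ becomes redundant under which hypothesis on $b$. No deeper categorical or logical machinery from the earlier sections is required for this particular proposition; it lives entirely inside the arithmetic of $[0,1]$-valued {\L}ukasiewicz semantics.
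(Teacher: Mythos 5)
Your proof is correct. Note that the paper does not prove this proposition at all --- it imports it from \cite{Leandro09} as a cited result --- so there is no internal argument to compare against; your write-up supplies the verification the paper delegates to the reference. Your route is the natural one: the closed forms $a_1\otimes\cdots\otimes a_k=\max(0,\sum_i a_i-(k-1))$ and $a_1\oplus\cdots\oplus a_k=\min(1,\sum_i a_i)$ (both sound, by the induction you sketch, since a vanishing partial sum forces the whole truncated sum to vanish), the substitution $\neg x_i = 1-x_i$ turning the constant into $n-(m-1)=-p+1$ in the conjunctive case and $n$ in the disjunctive case, and the one-sided redundancy of the double clipping: the affine argument is bounded above by $p+(-p+1)=1$ when $b=-p+1$, making $\min(1,\cdot)$ vacuous, and bounded below by $-n+n=0$ when $b=n$, making $\max(0,\cdot)$ vacuous. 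All three steps check out arithmetically, and you are right that nothing from the categorical machinery of the paper is needed --- the statement lives entirely in the $[0,1]$-valued semantics of {\L}ukasiewicz connectives and the definition of $\psi_b$.
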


\begin{table}
\begin{center}
\tiny
\begin{tabular}{|c|c||c|c||c|c||c|c|}
  \hline
  Formula: & Configuration: & Formula: & Configuration: & Formula: & Configuration: & Formula: & Configuration: \\
  \hline
  \hline
  $\neg x\oplus y$
  &
  $\xymatrix @R=6pt @C=6pt { x\ar[dr]_{-1} & \ar@{-}[d]^{1} & \\
                           & *+[o][F-]{\varphi} \ar[r]& \\
           y\ar[ur]^{1} &  &  \
           }$
  &
  $ x\otimes \neg y$
  &
  $\xymatrix @R=6pt @C=6pt { x\ar[dr]_{1} & \ar@{-}[d]^{0} & \\
                           & *+[o][F-]{\varphi} \ar[r]& \\
           y\ar[ur]^{-1} &  &  \
           }$
  &
  $x\oplus y$
  &
  $\xymatrix @R=6pt @C=6pt { x\ar[dr]_{1} & \ar@{-}[d]^{0} & \\
                           & *+[o][F-]{\varphi} \ar[r]& \\
           y\ar[ur]^{1} &  &  \
           }$
  &
  $\neg x\otimes \neg y$
  &
  $\xymatrix @R=6pt @C=6pt { x\ar[dr]_{-1} & \ar@{-}[d]^{1} & \\
                          & *+[o][F-]{\varphi} \ar[r]& \\
           y\ar[ur]^{-1} &  &  \
           }$ \\
  \hline
  $x\oplus \neg y$ & $\xymatrix @R=6pt @C=6pt { x\ar[dr]_{1} & \ar@{-}[d]^{1} & \\
                           & *+[o][F-]{\varphi} \ar[r]& \\
           y\ar[ur]^{-1} &  &  \
           }$ & $x\otimes y$ & $\xymatrix @R=6pt @C=6pt { x\ar[dr]_{1} & \ar@{-}[d]^{-1} & \\
                           & *+[o][F-]{\varphi} \ar[r]& \\
           y\ar[ur]^{1} &  &  \
           }$ &
  $\neg x\otimes y$ & $\xymatrix @R=6pt @C=6pt { x\ar[dr]_{-1} & \ar@{-}[d]^{0} & \\
                           & *+[o][F-]{\varphi} \ar[r]& \\
           y\ar[ur]^{1} &  &  \
           }$ &
           $\neg x \oplus \neg y$
             &
             $\xymatrix @R=6pt @C=6pt { x\ar[dr]_{-1} & \ar@{-}[d]^{2} & \\
                           & *+[o][F-]{\varphi} \ar[r]& \\
           y\ar[ur]^{-1} &  &  \
           }$
               \\
           \hline
\end{tabular}
\end{center}
\caption{Possible configurations for a neuron in a {\L}NN a its interpretation.}\label{semioticaLuk}
\end{table}
A topology for a neuronal network can be described by a multi-diagram constructed by the selection of multi-arrows and interpreting them as neuron configurations. In this case every multi-morphisms used on the multi-diagram is a map and every wire links vertices of the some type, $\Omega=[0,1]$, in $Rel_{[0,1]}$. If $D:\G\rightarrow Rel_{[0,1]}$ is a diagram describing a {\L}NN, its vague limit is a distribution $\overline{lim}\;\D\subset \prod_I\Omega$, where the finte set $I$ of indexes is defined by wires used in the diagram. In this sense, $\prod_I\Omega=\Omega^n$ where $n$ is the number of wires in $D$.
\begin{center}
\tiny
$
\xymatrix @R=6pt @C=8pt { x\ar[dr]_{1} & \ar@{-}[d]^{-1} & \\
                           & *+[o][F-]{\otimes} \ar[rd]^{-1}& \ar@{-}[d]^{1} \\
           y\ar[ur]^{1} & *+[o][F-]{=}\ar[r]_{1}  &  *+[o][F-]{\Rightarrow} \ar[rd]_{1} &\ar@{-}[d]^{0}\\
           z\ar[dr]_{-1} \ar[ur]^{1}& \ar@{-}[d]^{1}\ar@{-}[u]_{0} & \ar@{-}[d]_{0} &*+[o][F-]{\oplus} \ar[r]&\\
                           & *+[o][F-]{\Rightarrow} \ar[r]^{1}&*+[o][F-]{=} \ar[ru]^{1}\\
           w\ar[ur]^{1} &  &  \\
           }
$
\end{center}
\begin{center}
\tiny
\begin{tabular}{lll}
   INTERPRETATION:&\\
   &$(\sum_J\overline{lim}\;D)=\{(x,y,z,w,(x \otimes y)\Rightarrow z)\oplus (z\Rightarrow w)):x,y,z,w\in \Omega\}$\\
\end{tabular}
\end{center}
Note that, we may distinguish, in a neural network,  three types of wires\cite{Michell86}: input wires used to feed the network, hidden wires are those what define links between the first layer of arrows to the least layer of output wires. Let $J\subset I$ be the set of indexes defined by hidden wires, because every multi-morphism defines a map it follows
\[
\sum_J\overline{lim}\;D= \pi_{I\setminus J}\;\overline{lim}\;D,
\]
where $\pi_{I\setminus J}$ is a projection from $\prod_IA_i$ to $\prod_{I\backslash J}A_i$. The map $\sum_J\overline{lim}\;D$ describes the functional dependence between input wires and output wires.

A pattern in a data set can be described by the neural network structure. The network defined by the multi-diagram presented bellow
\begin{center}
\tiny
$
\xymatrix @R=15pt @C=20pt {
           x_1\ar[dr]^{-1} \ar[dddr]_{-1}& \ar[d]^{-1} & \\
           x_2\ar[r]^{-1} & *+[o][F-]{\oplus}\ar[rd]_{1}  & \ar[d]^{-1} \\
           x_3\ar[ur]^{-1} \ar[dr]_{-1}&\ar[d]^{-1} & *+[o][F-]{\otimes} \ar[r]& y\\
           x_4 \ar[r]^{1}\ar[ruu]^{1}& *+[o][F-]{\oplus} \ar[ru]^{1}&\\
           x_5\ar[ur]^{1} &  &  \\
           x_6 &  &  \\
           }
$
\end{center}
was generated using the data set of Table \ref{dataset} by the algorithm described in \cite{Leandro09}. For training the such neural networks we changed the Levenderg-Marquardt algorithm \cite{HaganMenhaj99}, restricting the knowledge dissemination in the network structure using soft crystallization \cite{Leandro09}. This procedure reduces neural network plasticity without drastically damaging the learning performance, allowing the emergence of symbolic patterns. This makes the descriptive power of produced neural networks similar to the descriptive power of {\L}ukasiewicz logic propositional language\cite{Castro98}, reducing the information lost on translation between symbolic and connectionist structures. This translation is made using Proposition \ref{conf classification} and can be used as the symbolic description for the network vague limit,
\[
_{\sum_J\overline{lim}\;D=\{(x_1,x_2,x_3,x_4,x_5,x_6,y):\;y=(\neg x_1\oplus \neg x_2 \oplus x_3 \oplus x_4) \otimes (\neg x_1\oplus \neg x_3 \oplus x_4 \oplus x_5)\}}
\]
used as a model for the data \cite{Michell86}. Its quality is evaluated selecting a similarity relation defined in $\Omega^n$. The selection of an adequate similarity is problem dependent. Naturally, prefect descriptions have top degrees of similarity, $\top$, independently of the selected measure.
\end{example}

\begin{table}
\begin{center}
\tiny
\begin{tabular}{|c||c|c|c|c|c|c|c|c|c|c|c|c|c|c|c|c|c|c|c|c|}
  \hline
Att &1&2&3&4&5&6&7&8&9&10&11&12&13&14&15&16&17&18&19&20\\
  \hline
  \hline
$x_1$&0& 0&0.7&0.3&1.0&0.3&0&0.3&0.3&0.7&0.7&0.7&0.7&0.3&0.3&1.0& 0&1.0&0.3&0.3 \\
$x_2$&1.0&0.7& 0&0.7& 0&0.3&.0& 0&1.0& 0&1.0&0.3&0.3&0& 0&0.7& 0&1.0&0.3& 0\\
$x_3$&0.3& 0&0.7&0.7&0.7&1.0&.7&1.0& 0&0.3&1.0&0.7&1.0&0.7&0.7& 0&0.7&0.7& 0& 0\\
$x_4$&1.0&1.0&0.3& 0& 0& 0&0&1.0&1.0&0.3& 0&1.0&0.7&0&0.3&0.3&1.0&0.3&0.7&0.3\\
$x_5$&0.7&0.7& 0&0.7& 0&1.0&.3&0.3& 0&0.7&0.7&1.0&0.7&0& 0&0.3&0.3& 0& 0&1.0\\
$x_6$&0.7&1.0&0.3&0.3&0.7&1.0&.7& 0&0.3&0.3&0.7&1.0& 0&0.3& 0& 0& 0&0.7& 0&1.0\\
\hline
$y$&1.0&1.0&1.0&1.0&0.3&1.0&1.0&1.0&1.0&1.0&1.0&1.0&1.0&1.0&1.0&0.7&1.0&0.7&1.0&1.0\\
  \hline
\end{tabular}
\end{center}
\caption{Data set having 20 cases, with vague propositional variables $x_1,x_2,x_3,x_4,x_5,x_6,y$, in a universe governed by {\L}ukasiewicz logic.}\label{dataset}
\end{table}
\section{Conclusions and future work}
This is a novel approach to vague data modeling, based on extensions of well established  notion used for algetic modeling. Despite of its application to formalize processes of learning using {\L}ukasiewicz neural networks and on vague decision trees, its use for formalizeing generic learning processes seems to be restricted by the nature or our notion of similarity relation. More work must be done in order to use generic kernel function as a mechanism to compare entities.

This working was motivated by the description of a framework to specify vague knowledge bases, having the knowledge described by multi-diagrams. Many real-word application domains are characterized by the presence of both vague and complex relational structure. Research in this fields expanded rapidly in recent years\cite{Domingos06}. There is an increasingly pressing for a unifying framework, a common language for describing and relating the different approaches to statistical relational learning. In this paper we presente our preliminary work on the possibility of applying logic extensions to the established algebraic modeling framework based on sketchs\cite{Ehresm68}. However our approach dificultes the control over the graphic description complexity for concepts. More work must be done for defining  a meta-language to simplify vague description using sketches\cite{Diskin99}.
\bibliographystyle{unsrt}
\bibliography{tesebib2}
\end{document}